\documentclass[twocolumn]{autart}    

\usepackage{graphicx}
\usepackage{amsmath,amssymb,amsfonts}
\newenvironment{proof}{\noindent\textit{Proof.}}{\hfill$\blacksquare$}
\usepackage{cite}
\usepackage[normalem]{ulem}
\usepackage{mathtools}
\newtheorem{propos}{Proposition}  
\usepackage{xcolor}
\usepackage{booktabs}
\usepackage{float}
\usepackage{comment}

\begin{document}

\begin{frontmatter}

\title{Adaptive MPC for Constrained Trajectory Tracking of Uncertain LTI System with Input-Rate Limits} 

\thanks{Corresponding author B.~Dey}

\author[Paestum]{Bishal Dey}\ead{dey.bishal@gmail.com},    
\author[Rome]{Abhishek Dhar}\ead{abhishek.dharr@gmail.com},               
\author[Baiae]{Sumit Kr. Pandey}\ead{skpdmk@gmail.com},  
\author[Paestum]{Anindita Sengupta}\ead{aninsen@ee.iiests.ac.in} 
\address[Paestum]{Department of Electrical Engineering, Indian Institute of Engineering Science and Technology, Howrah}  
\address[Rome]{Division of Technology and Innovation, Epiroc, Sweden}             
\address[Baiae]{Faculty of Science and Engineering, Jharkhand Rai University, Ranchi}        

\begin{keyword}                           
Adaptive Control; Constrained System; MPC; Trajectory Tracking; Uncertain LTI System.               
\end{keyword}                             

\begin{abstract}                          
This paper addresses the trajectory-tracking problem for discrete-time linear time-invariant systems with bounded parametric uncertainty, subject to hard constraints on system states, control inputs, and input rates. Unlike existing methods, which often consider only partial uncertainty, omit input-rate or state constraints, or focus on regulation problems, this work provides a systematic adaptive model predictive control (MPC) solution for constrained trajectory tracking under full parametric uncertainty. Determining the control input required to achieve zero tracking error under unknown parameters is challenging. Simultaneously, trajectory tracking under uncertainty with input-rate constraints induces temporal coupling in the control sequence, resulting in a time-varying admissible control set and rendering standard recursive feasibility arguments inapplicable. These challenges are overcome by systematically utilizing the estimated system parameters, coupled with a suitably designed adaptive learning process within a reformulated MPC framework. The recursive feasibility of the proposed MPC optimization routine is then rigorously established despite the time-varying admissible control set induced by input-rate constraints. Closed-loop stability is guaranteed via Lyapunov-based analysis, ensuring convergence of the tracking error and boundedness of system states. Simulation results validate the effectiveness of the proposed approach in achieving robust constrained trajectory tracking under uncertainty.
\end{abstract}

\end{frontmatter}

\section{Introduction}
The design of control frameworks for trajectory tracking in constrained dynamical systems with bounded plant-parameter uncertainty is critical in safety-critical applications such as aerospace systems \cite{Tian2023}, autonomous vehicles \cite{Hu2024, Dey2025}, and robotics \cite{Liu2022}. In such systems, controllers must ensure accurate tracking while respecting constraints arising from actuator limitations, safety requirements on system states, and restrictions on control input rates. In practice, plant-parameter uncertainty is unavoidable due to modelling inaccuracies, unmodeled dynamics, environmental variations, component ageing, manufacturing tolerances, or time-varying characteristics, leading to deviations between the nominal model and actual system behaviour and potentially causing performance degradation or constraint violations. Consequently, developing control strategies that ensure robust trajectory tracking under multiple constraints in the presence of uncertainty remains a significant challenge.
\par This challenge is further intensified when hard constraints are imposed simultaneously on system states, control inputs, and input increments. State constraints ensure safe operation within physical limits, while input constraints arise from actuator saturation or hardware limitations. Input-rate constraints are imposed to prevent abrupt control variations, reduce actuator wear, avoid excitation of unmodeled high-frequency dynamics, and ensure smooth implementation. However, the input-rate constraints introduce additional complexity by coupling successive control inputs and restricting feasible control sequences, thereby reducing the admissible solution space. In the presence of plant-parameter uncertainty, this coupling further complicates trajectory prediction and makes it difficult to guarantee both constraint satisfaction and tracking performance. Existing studies have primarily addressed either constrained regulation problems for uncertain systems \cite{Ghosh2026}, \cite{Dhar2021} or unconstrained \cite{Yang2021} or partially constrained \cite{Kohler2026} trajectory tracking under uncertainty.
\par Several control strategies have been proposed to address constrained tracking problems. 
Among these methods, model predictive control (MPC) has become one of the most widely used frameworks for constrained systems \cite{Kouvaritakis2016}, as it explicitly incorporates state and input constraints within an online optimization routine. However, classical MPC relies on accurate plant models, which are rarely available in practice due to modelling errors and parameter uncertainty, potentially resulting in infeasible or degraded solutions. Therefore, various extensions of MPC have been developed to address this problem. The tube-based MPC (TMPC) strategies \cite{Rakovic2012}, \cite{Langson2003} provide robustness against additive uncertainties by confining system trajectories within invariant tubes around nominal trajectories. Recent studies have also explored combining MPC with machine learning \cite{yao2023discriminative} and deep learning \cite{hu2023coordinated} for uncertainty modelling \cite{rosolia2020learning}. However, such data-driven approaches typically lack guarantees of recursive feasibility, robustness, and closed-loop stability. Adaptive MPC approaches \cite{Tanaskovic2014}, \cite{fan2016further} address parametric uncertainty through online parameter estimation. Early methods required identification of the full set of uncertain parameters \cite{Lorenzen2017}, while later approaches reduced computational complexity by requiring only parameter bounds \cite{Dhar2025} or using gradient-descent-based adaptive laws \cite{zhu2015adaptive}. Other approaches, such as control barrier functions \cite{ames2014control}, reference governors \cite{bemporad1997nonlinear}, anti-windup compensation \cite{lavretsky2007stable}, and saturation-based designs \cite{niu2020global}, have also been proposed for enforcing safety and input constraints. 
\par Despite these advances, extending existing approaches to trajectory-tracking problems under uncertainty remains challenging, particularly when input-rate constraints are imposed alongside state and input constraints. Unlike regulation problems, trajectory tracking requires continuous adaptation of control actions to follow a time-varying reference. In such settings, input-rate constraints are not only practical but often necessary to avoid abrupt variations in control signals required to drive the system along the desired trajectory. This ensures smooth actuator operation and prevents excessive wear or excitation of undesirable dynamics. However, the presence of input-rate constraints further limits control flexibility by restricting how rapidly control inputs can evolve over time, thereby limiting the system’s ability to respond to both reference variations and uncertainty. In uncertain systems, this poses a fundamental difficulty in constructing control sequences that simultaneously ensure accurate tracking, satisfy all constraints at every time step, and remain robust to parameter variations. As a result, guaranteeing constraint satisfaction and performance across all admissible uncertainty realizations becomes significantly more complex. This motivates the development of new control frameworks that address these challenges systematically and computationally tractably.
\par In this article, an adaptive MPC framework is proposed for trajectory tracking of a discrete-time linear time-invariant (LTI) system with bounded parametric uncertainty, subject to hard constraints on system states, control inputs, and input rates. The system dynamics are expressed in equivalent error coordinates with respect to a time-varying reference, thereby transforming the trajectory-tracking problem into a regulation problem. The system constraints, including state, input, and input-rate constraints, are meticulously reformulated in the error-dynamics domain to ensure consistency with the original system's physical constraints and to facilitate their enforcement in the control design. A key challenge in solving the tracking problem in the presence of parametric uncertainty is determining the appropriate input required to maintain zero tracking error, as such an input generally depends on the exact knowledge of the system parameters. This challenge is systematically addressed by constructing the required input signal using adaptively estimated parameters, which are subsequently incorporated into the control design to guarantee the desired system-theoretic properties. The proposed control framework ensures recursive feasibility, guaranteeing that the optimization problem remains solvable at all subsequent time steps. Furthermore, closed-loop stability of the system is analyzed to demonstrate convergence of the error dynamics to the origin while ensuring boundedness of the physical system states.
\par\textit{Notations:} If a nonempty set $\mathcal{P}\subset\mathbb{R}^n$ is compact and convex, then it is termed a $\mathcal{C}$-set, and if in its interior the origin lies, then it is called a $\mathcal{C}_0$-set. The diameter of $\mathcal{P}$ is $\mathrm{diam}(\mathcal{P})=\max\{\|m-n\|_F:m,n\in\mathcal{P}\}$. The symbols $\|\cdot\|$, $\|\cdot\|_\infty$, and $\|\cdot\|_F$ denote the Euclidean, infinity, and Frobenius norms, respectively. For two sets $\mathcal{P}$ and $\mathcal{Q}$, the Minkowski sum and Pontryagin difference are defined as $\mathcal{P}\oplus\mathcal{Q}=\{a+b:a\in\mathcal{P},\,b\in\mathcal{Q}\}$ and $\mathcal{P}\ominus\mathcal{Q}=\{a:a+\mathcal{Q}\subseteq\mathcal{P}\}$. The index sets are defined as $\mathbb{I}=\{0,1,2,\ldots\}$, $\mathbb{I}_N=\{0,1,\ldots,N\}$, $\mathbb{I}^+=\{1,2,\ldots\}$, and $\mathbb{I}_N^+=\{1,2,\ldots,N\}$. For $P\in\mathbb{R}^{n\times n}$ and $x\in\mathbb{R}^n$, the weighted quadratic norm is $\|x\|_P^2=x^\top P x$. The convex hull generated by points $\{s_1,\ldots,s_m\}\subset\mathbb{R}^n$ is denoted by $\mathrm{co}\{s_1,\ldots,s_m\}$. A compact ball centered at $M\in\mathbb{R}^{n\times m}$ with radius $r$ is given by $B_r(M)=\{M' : \|M-M'\|_F\le r\}$.

\section{Problem Statement}
 In this work, a constrained uncertain discrete-time LTI plant is taken into consideration
 \begin{equation}
    x_{k+1}=Ax_k+Bu_k= \Theta X_k 
    \label{eq:1}
\end{equation}
\begin{subequations}
\begin{align}
    &x_k \in \mathcal{X}
    \label{eq:StateC}\\
    &u_k \in \mathcal{U}
    \label{eq:IC}
     \\
    \Delta &u_k \in \mathcal{U}_\Delta \label{eq:RIC} 
\end{align}
\end{subequations}
where, $\Delta u_k=u_k-u_{k-1}$. The lumped parameter $\Theta =[A,B]$ is uncertain and $X_k=[x_k^T,u_k^T]^T$ is the lumped regressor vector. $A \in \mathbb{R}^{n \times n}$ and $B \in \mathbb{R}^{n \times m}$ are unknown and constant matrices. The constraints $\mathcal{X} \subset \mathbb{R}^n$, $\mathcal{U} \subset \mathbb{R}^m$ and $\mathcal{U}_\Delta\subset \mathbb{R}^m$ are $C_0-$sets.
\begin{assum}\label{assum:lumped}
    The lumped uncertain parameter $\Theta$ satisfies the following
\begin{equation}
    \Theta \in \Psi \triangleq \textnormal{co}\{\psi_1,\psi_2,\cdots,\psi_G\}\nonumber
\end{equation}
where $\psi_i\triangleq [A_i,B_i],\forall i\in \mathbb{I}^+_G$ are known.
\end{assum}
The objective is to design a control strategy that enables the states of the uncertain system \eqref{eq:1} to track a reference trajectory $x^r_k$, which satisfies the following assumption
\begin{assum}\label{assum:ref}
    $x^r_k\in\mathcal{X}^r,\forall k\in \mathbb{I}^+$, where $\mathcal{X}^r\triangleq \text{co}\{x^r_1,x^r_2,\cdots,x^r_L\}\subset \mathbb{R}^n$, where $L\in\mathbb{I}^+$ is finite and $x^r_i,\forall i\in \mathbb{I}_L^+$ are known. 
\end{assum}
\begin{propos}\label{propos:xrbound}
    If $x^r_k$ is reachable for all $[\bar{A},\bar{B}]\in \Psi$, then there exists a $u^r_k\in \mathcal{U}^r$, such that
\begin{equation}
    x^r_{k+1}=Ax^r_k+Bu^r_k
\end{equation}
where $[A,B]$ are the uncertain system parameters and $\mathcal{U}^r=\{u^r_k|u^r_k=(\bar{B}^T\bar{B})^{-1}\bar{B}^T(x^r_{k+1}-\bar{A}x^r_k),[\bar{A},\bar{B}]\in \Psi,x^r_k\in \mathcal{X}^r\}$. 
\end{propos}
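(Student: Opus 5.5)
The statement is essentially a consequence of reachability together with the least-squares structure in the definition of $\mathcal{U}^r$, so I plan a short, direct argument rather than a set-theoretic one.

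\emph{Step 1: reading of reachability.} I interpret ``$x^r_k$ is reachable for all $[\bar{A},\bar{B}]\in\Psi$'' as follows. For every $\bar\Theta=[\bar{A},\bar{B}]\in\Psi$ and every $k$, the vector $x^r_{k+1}-\bar{A}x^r_k$ lies in the range of $\bar{B}$. I also assume that every $\bar{B}$ has full column rank, so that $(\bar{B}^T\bar{B})^{-1}$ exists. This is implicit in the definition of $\mathcal{U}^r$ and I would state it explicitly.

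\emph{Step 2: specialize to the true parameters.} By Assumption~\ref{assum:lumped}, the true parameter $\Theta=[A,B]$ belongs to $\Psi$. Reachability then gives some $v\in\mathbb{R}^m$ with
\[
Bv = x^r_{k+1}-Ax^r_k .
\]
Define $u^r_k := (B^TB)^{-1}B^T(x^r_{k+1}-Ax^r_k)$. Substituting the relation above gives $u^r_k=(B^TB)^{-1}B^TBv=v$. Hence
\[
Ax^r_k+Bu^r_k = Ax^r_k+Bv = x^r_{k+1},
\]
which is the required dynamic identity. The key observation is that the left-inverse formula reproduces the exact preimage whenever the residual lies in the range of $B$.

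\emph{Step 3: membership in $\mathcal{U}^r$.} This follows directly from the definition of $\mathcal{U}^r$. Take $[\bar{A},\bar{B}]=[A,B]\in\Psi$ and note that $x^r_k\in\mathcal{X}^r$ by Assumption~\ref{assum:ref}. Then $u^r_k$ has exactly the form required, so $u^r_k\in\mathcal{U}^r$.

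\emph{Main obstacle.} The only delicate point is making the reachability hypothesis precise. In particular, one needs invertibility of $\bar{B}^T\bar{B}$ and the range condition for the true $B$, not merely for the vertices $\psi_i$. If the hypothesis is only imposed at the vertices, it does not obviously pass to convex combinations: a combination of full-column-rank matrices can lose rank, and the range condition is not convex. I therefore intend to state the hypothesis for every element of $\Psi$, as the proposition does. Under that reading the steps above go through without further work.
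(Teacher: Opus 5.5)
Your argument is correct. The paper states this proposition without proof, and your reasoning is exactly the justification implicit in how $\mathcal{U}^r$ is built: you instantiate the reachability hypothesis at the true $[A,B]\in\Psi$ (Assumption~\ref{assum:lumped}), note that $(B^TB)^{-1}B^T$ returns the exact preimage when $x^r_{k+1}-Ax^r_k\in\mathrm{range}(B)$, and read off membership in $\mathcal{U}^r$ from its definition. One small addition: Assumption~\ref{assum:ref} only guarantees $x^r_k\in\mathcal{X}^r$ for $k\in\mathbb{I}^+$, so the membership step should be stated for $k\in\mathbb{I}^+$, which is the range used in Lemma~\ref{lem:ubound}; this also covers $x^r_{k+1}\in\mathcal{X}^r$ if you read $\mathcal{U}^r$ as a fixed set in which both reference points range over $\mathcal{X}^r$.
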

\begin{lem}\label{lem:ubound}
    If $u^r_k\in \mathcal{U}^r, \forall k \in \mathbb{I}^+$, then the bound on the input rate $\Delta u^r_k$ is given as
\begin{equation}
    \|\Delta u^r_k\|\leq d_u:=\mathrm{diam}(\mathcal{U}^r),\quad \forall k \in \mathbb{I}^+
    \label{eq:ubound}
\end{equation} 
\end{lem}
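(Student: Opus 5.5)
The approach is to write $\Delta u^r_k=u^r_k-u^r_{k-1}$ and observe that both terms lie in the same set $\mathcal{U}^r$. The bound then follows directly from the definition of the diameter given in the Notations.

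First, by hypothesis $u^r_k\in\mathcal{U}^r$ for every $k\in\mathbb{I}^+$. Since $u^r_{k-1}$ is also needed, I treat $k\ge 2$ directly, where $k-1\in\mathbb{I}^+$ and hence $u^r_{k-1}\in\mathcal{U}^r$. For the initial index, I will adopt the convention, implicit in the statement, that the reference input sequence is initialized inside $\mathcal{U}^r$; for instance, $u^r_0$ is taken as any element of $\mathcal{U}^r$, such as $u^r_1$ itself.

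Second, for any pair $m=u^r_k$, $n=u^r_{k-1}$ in $\mathcal{U}^r$, we have
\begin{equation*}
\|\Delta u^r_k\|=\|u^r_k-u^r_{k-1}\|\le \sup\{\|m-n\|: m,n\in\mathcal{U}^r\}=\mathrm{diam}(\mathcal{U}^r)=d_u .
\end{equation*}
Here I use that for vectors the Frobenius norm coincides with the Euclidean norm, so the diameter defined via $\|\cdot\|_F$ agrees with the Euclidean quantity in \eqref{eq:ubound}.

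The only genuine obstacle is ensuring that $d_u$ is finite, that is, that $\mathcal{U}^r$ is bounded, so that the maximum in the definition of $\mathrm{diam}$ exists. I would argue this as follows. The map $([\bar A,\bar B],x^r_k,x^r_{k+1})\mapsto(\bar B^T\bar B)^{-1}\bar B^T(x^r_{k+1}-\bar A x^r_k)$ is continuous on $\Psi\times\mathcal{X}^r\times\mathcal{X}^r$, provided every $\bar B\in\Psi$ has full column rank. This rank condition is implicit in Proposition~\ref{propos:xrbound}, since the pseudo-inverse formula is used there. The domain $\Psi\times\mathcal{X}^r\times\mathcal{X}^r$ is compact, because $\Psi$ and $\mathcal{X}^r$ are convex hulls of finitely many points by Assumptions~\ref{assum:lumped} and \ref{assum:ref}. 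The image $\mathcal{U}^r$ is therefore compact, so the diameter is attained and finite, which completes the argument.
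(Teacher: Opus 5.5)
Your proposal is correct and follows the paper's argument: the paper's proof is the same one-line observation that $u^r_k,u^r_{k-1}\in\mathcal{U}^r$ gives $\|u^r_k-u^r_{k-1}\|\le\mathrm{diam}(\mathcal{U}^r)$. You also handle two points the paper leaves implicit. The first is the index $k=1$, which needs $u^r_0\in\mathcal{U}^r$ and is not covered by the hypothesis. The second is that $d_u$ is finite, which you get from compactness of $\Psi\times\mathcal{X}^r\times\mathcal{X}^r$ and full column rank of every $\bar B$.
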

\begin{proof} 
Since $u^r_k\in\mathcal{U}^r,\forall k \in \mathbb{I}^+$, it follows, $\|\Delta u^r_k\|=\|u^r_{k}-u^r_{k-1}\|\leq d_u\triangleq \text{diam}(\mathcal{U}^r), \forall k \in \mathbb{I}^+$. Hence, $\Delta u^r_k \in \mathcal{B}_{d_u}(\bar{\textbf{0}}), \forall k\in \mathbb{I}^+$, where $\bar{\textbf{0}}\in \mathbb{R}^m$ is a zero vector. It is concluded that \eqref{eq:ubound} holds. 
\end{proof}
\par \textit{Definition 1:} The $C_0-$set containing the input rate (defined in \textit{Lemma} \ref{lem:ubound}) is defined as $\mathcal{U}^r_\Delta =\{\Delta u^r_k |\|\Delta u^r_k\|\leq d_u, \Delta u^r_k=u^r_k-u^r_{k-1},\forall u^r_k\in \mathbb{R}^m,\forall k\in \mathbb{I}^+\}$. 
\par In case of successful tracking, $x_k\rightarrow x^r_k$, and $u_k\rightarrow u^r_k$. To facilitate the tracking problem, it is reformulated into a regulatory problem as given below
\begin{equation}
    x^e_{k+1}=Ax^e_k+Bu^e_k=\Theta X^e_k
    \label{eq:incremental}
\end{equation}
subjected to
\begin{subequations}
\begin{align}
    &x^e_k\in \mathcal{X}^e:=\{x-x^r|x\in\mathcal{X},x^r\in \mathcal{X}^r\} 
     \label{eq:se}\\
    &u^e_{k} \in \mathcal{U}^e:=\{u-u^r|u \in \mathcal{U},u^r\in \mathcal{U}^r\} \label{eq:ue}
     \\
    \Delta &u^e_{k} \in \mathcal{U}^e_\Delta:=\{\Delta u-\Delta u^r|\Delta u \in \mathcal{U}_\Delta, \Delta u^r \in \mathcal{U}^r_\Delta\} \label{eq:del} 
\end{align}
\end{subequations} 
where 
$X^e_k \triangleq \begin{bmatrix}
    x^{e^T}_{k} ,  u_k^{e^T}
\end{bmatrix}^T \in \mathbb{R}^{(n+m)}$. 
The following COCP presents the regulatory problem \eqref{eq:incremental} in the presence of \eqref{eq:se}, \eqref{eq:ue} and \eqref{eq:del}
\begin{equation}
    \mathbb{P}: \quad \textbf{u}_k^{e^*}=\text{argmin}_{u^e_k}J_k \nonumber
\end{equation}
\begin{subequations}
\begin{align}
    \text{s.t.} \qquad &x^e_{k+i \mid k} 
    = A x^e_{k+i-1 \mid k} + Bu^e_{k+i-1\mid k}, \forall i \in \mathbb{I}_N^+
     \label{eq:5a}\\
    & 
        u^e_{k+i-1\mid k}\in \mathcal{U}^e,
        \Delta u^e_{k+i-1 \mid k}\in \mathcal{U}^e_\Delta
     \\
    &x^e_{k+i \mid k} \in \mathcal{X}^e,x^e_{k+N \mid k} \in \mathcal{X}_{T} \subset \mathcal{X}^e  
\end{align}
\end{subequations} 
 where $x^e_{k+i \mid k}$ and $u^e_{k+i-1 \mid k}$ are the predictions of length N, $J_k$ is the cost function and the solution of the COCP $\mathbb{P}$ is $\textbf{u}_k^{e^*}=\{
    u^{e^*}_{k \mid k}, \cdots, u^{e^*}_{k+N-1 \mid k}
\}^T$. Since $A$ and $B$ are unknown, so \eqref{eq:5a} is unimplementable and hence \eqref{eq:incremental} is reformulated using the following estimated dynamics 
\begin{equation}
    \hat{x}^e_{k+1}=\hat{A}_kx^e_k+\hat{B}_ku^e_k
    \label{eq:estimate}
\end{equation}
 The matrices $\hat{A}_k\in \mathbb{R}^{n\times n}$ and $\hat{B}_k\in \mathbb{R}^{n\times m}$ vary with time, and are updated using a projection-based gradient descent adaptive law. The COCP $\mathbb{P}$ is modified to guarantee trajectory tracking and recursive feasibility of the uncertain plant \eqref{eq:1}.

\section{Indirect Adaptive Law}
This section presents the adaptive law for parameter estimation and the development of error sets that bound the state errors arising from parameter adaptation. 
\subsection{Parameter Update Law}
The estimated system \eqref{eq:estimate} can be expressed in a compact form as follows
\begin{equation}
    \hat{x}^e_{k+1}=\hat{\Theta}_kX^e_k
    \label{eq:compact estimate}
\end{equation}
where $\hat{\Theta}_k\triangleq\begin{bmatrix}
    \hat{A}_k & \hat{B}_k
\end{bmatrix}\in \mathbb{R}^{n \times (n+m)}$. A gradient-descent-based adaptive law, which updates the estimated system parameters are as follows: 
\begin{equation}
    \hat{\Theta}_{k+1}=\hat{\Theta}_k+\lambda \Tilde{x}^e_{k+1}X_k^{e^T}
    \label{eq:15}
\end{equation}
where $\lambda$ is the learning rate and
\begin{equation}
    \Tilde{x}^e_k \triangleq x^e_k-\hat{x}^e_k; \Tilde{x}^e_{k+1}=\Tilde{\Theta}_kX^e_k;\Tilde{\Theta}_k\triangleq\Theta-\hat{\Theta}_k 
\end{equation}
\begin{thm} \label{thm: Adaptive law constraint}
\cite{zhu2015adaptive} For the uncertain system \eqref{eq:incremental}, if there exists a $ u^e_k$ that satisfies 
\begin{equation}
    X_k^{e^T}X^e_k \leq \frac{2-\alpha}{\lambda}
    \label{eq:17}
\end{equation}
given $\alpha \in (0,2)$ and $\lambda >0$, then with the adaptive law \eqref{eq:15}, $\hat{\Theta}_k$ and $ \Tilde{x}^e_k$ is ultimately bounded and $ \Tilde{x}^e_k$ asymptotically converges to zero.
\end{thm}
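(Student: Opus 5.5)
The argument is a standard Lyapunov analysis of the parameter estimation error. Take
\[
V_k = \mathrm{tr}\big(\Tilde{\Theta}_k^T\Tilde{\Theta}_k\big) = \|\Tilde{\Theta}_k\|_F^2 .
\]
Since $\Theta$ is constant, the adaptive law \eqref{eq:15} gives the error recursion
\[
\Tilde{\Theta}_{k+1} = \Tilde{\Theta}_k - \lambda \Tilde{x}^e_{k+1}X_k^{e^T} = \Tilde{\Theta}_k\big(I - \lambda X^e_k X_k^{e^T}\big),
\]
where we used $\Tilde{x}^e_{k+1} = \Tilde{\Theta}_k X^e_k$.

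Expand the difference:
\[
V_{k+1} - V_k = -2\lambda\,\mathrm{tr}\big(\Tilde{\Theta}_k^T \Tilde{x}^e_{k+1} X_k^{e^T}\big) + \lambda^2\,\mathrm{tr}\big(X^e_k \Tilde{x}^{e^T}_{k+1}\Tilde{x}^e_{k+1} X_k^{e^T}\big).
\]
By the cyclic property of the trace, the first trace equals $\Tilde{x}^{e^T}_{k+1}\Tilde{\Theta}_k X^e_k = \|\Tilde{x}^e_{k+1}\|^2$. The second trace equals $\|\Tilde{x}^e_{k+1}\|^2\, X_k^{e^T}X^e_k$. Therefore
\[
V_{k+1} - V_k = -\lambda\|\Tilde{x}^e_{k+1}\|^2\big(2 - \lambda X_k^{e^T}X^e_k\big).
\]
Under \eqref{eq:17} we have $2 - \lambda X_k^{e^T}X^e_k \ge \alpha$, so
\[
V_{k+1} - V_k \le -\lambda\alpha\|\Tilde{x}^e_{k+1}\|^2 \le 0 .
\]

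\emph{Boundedness.} $V_k$ is nonincreasing, so $\|\Tilde{\Theta}_k\|_F \le \|\Tilde{\Theta}_0\|_F$ for all $k$. Because $\Theta \in \Psi$ is fixed, $\hat{\Theta}_k$ stays in the ball $B_{r}(\Theta)$ with $r = \|\Tilde{\Theta}_0\|_F$. If $\hat\Theta_0\in\Psi$, then $r \le \mathrm{diam}(\Psi)$. Boundedness of $\Tilde{x}^e_{k+1}$ then follows from $\|\Tilde{x}^e_{k+1}\| \le \|\Tilde{\Theta}_k\|_F\|X^e_k\|$ together with $\|X^e_k\| \le \sqrt{(2-\alpha)/\lambda}$.

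\emph{Convergence.} Summing the decrease inequality telescopically gives
\[
\lambda\alpha\sum_{k=0}^{\infty}\|\Tilde{x}^e_{k+1}\|^2 \le V_0 < \infty .
\]
Hence $\|\Tilde{x}^e_k\| \to 0$.

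The only delicate point is the standing hypothesis that a feasible $u^e_k$ satisfying \eqref{eq:17} exists at every step, since the inequality is needed at every $k$. This must be ensured by the MPC constraint set later in the paper. Here it is assumed, as in the statement.

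The result does not give convergence of $\hat{\Theta}_k$ to $\Theta$; that would require persistent excitation. The theorem claims only boundedness of the estimates and convergence of the prediction error, which is exactly what this argument yields.
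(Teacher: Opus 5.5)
Your proof is correct and is essentially the standard argument. The paper gives no proof of its own and defers to \cite{zhu2015adaptive}, which uses the same Lyapunov function $V_k=\mathrm{tr}(\tilde{\Theta}_k^T\tilde{\Theta}_k)$ and the same identity $V_{k+1}-V_k=-\lambda\big(2-\lambda X_k^{e^T}X^e_k\big)\|\tilde{x}^e_{k+1}\|^2\le-\lambda\alpha\|\tilde{x}^e_{k+1}\|^2$ to get boundedness and square-summability of $\tilde{x}^e_{k+1}$; the bound also carries over to the projected law \eqref{eq:Adaptive} used later, because projection onto the convex set $\Psi\ni\Theta$ is non-expansive and can only decrease $\|\tilde{\Theta}_{k+1}\|_F$.
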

To ensure $\hat{\Theta}_k\in \Psi$, for all $k \in \mathbb{I}$, the update law \eqref{eq:15} is reformulated as follows  
\begin{equation}
    \hat{\Theta}_{k+1}=\text{Proj}_{\Psi}(\hat{\Theta}_k+\lambda\Tilde{x}^e_{k+1}X_k^{e^T})
    \label{eq:Adaptive}
\end{equation}
where $\Psi$ is known, $\hat{\Theta}_0\in \Psi$ and $\text{Proj}_\Psi(.)$ returns the orthogonal projection of the argument matrix on $\Psi$.
\begin{propos}\label{propos:xm}
    Since the sets $\mathcal{X}^e$ and $\mathcal{U}^e_k$ (Computation of the set $\mathcal{U}^e_k$ is shown in \textit{Section 4.2}) are $C_0$-sets, there exists $x_m$, $ u_m \in \mathbb{R}$ such that $\| x^e_k\| \leq  x_m, \forall  x^e_k \in \mathcal{X}^e$ and $\|u^e_k\| \leq u_m, \forall  u^e_k\in\mathcal{U}^e_k$. If $\lambda$ and $\alpha$ are chosen such that
\begin{equation}
     x_m^2 + u_m^2 \leq \frac{2-\alpha}{\lambda};\quad \alpha \in (0,2); \quad \lambda >0
\end{equation}
then the constraint \eqref{eq:17} is satisfied $\forall ( x^e_k, u^e_k,k)\in \mathcal{X}^e \times \mathcal{U}^e_k\times \mathbb{I}^+$. 
\end{propos}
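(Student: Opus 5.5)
The plan is to obtain both parts of the proposition directly: first the existence of $x_m$ and $u_m$, then a short norm computation showing \eqref{eq:17} holds.

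\emph{Existence of the bounds.} Since $\mathcal{X}^e$ is a $C_0$-set, it is compact, hence bounded in $\mathbb{R}^n$. We therefore define
\[
x_m \triangleq \max\{\|x\| : x\in\mathcal{X}^e\},
\]
which is finite and attained because the Euclidean norm is continuous on a compact set. For $\mathcal{U}^e_k$ we need a single constant valid for all $k$, which is the delicate point because the set is time-varying. The plan is to use the construction of Section 4.2, where $\mathcal{U}^e_k$ is obtained by intersecting the input-error set with the rate-constrained set around $u^e_{k-1}$. This gives $\mathcal{U}^e_k\subseteq\mathcal{U}^e$ for every $k$. The set $\mathcal{U}^e$ is the Minkowski-type combination $\mathcal{U}\oplus(-\mathcal{U}^r)$ of compact sets, so it is compact. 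Here $\mathcal{U}^r$ is bounded because it is the continuous image of the compact set $\Psi\times\mathcal{X}^r\times\mathcal{X}^r$, with $\bar B$ of full column rank. We then set
\[
u_m \triangleq \max\{\|u\| : u\in\mathcal{U}^e\},
\]
which bounds $\|u^e_k\|$ uniformly in $k$.

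\emph{Verifying \eqref{eq:17}.} With $X^e_k=[x^{e^T}_k,u^{e^T}_k]^T$ we have
\[
X^{e^T}_kX^e_k=\|x^e_k\|^2+\|u^e_k\|^2\le x_m^2+u_m^2\le \frac{2-\alpha}{\lambda}
\]
for every $(x^e_k,u^e_k,k)\in\mathcal{X}^e\times\mathcal{U}^e_k\times\mathbb{I}^+$. This is exactly \eqref{eq:17}, so the hypothesis of Theorem \ref{thm: Adaptive law constraint} is met. Since $\alpha\in(0,2)$ gives $2-\alpha>0$, the required choice is always possible, for instance
\[
\lambda=\frac{2-\alpha}{x_m^2+u_m^2}.
\]

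\emph{Expected obstacle.} The only nontrivial step is the uniform-in-$k$ bound on $\mathcal{U}^e_k$. If the containment $\mathcal{U}^e_k\subseteq\mathcal{U}^e$ could not be read off Section 4.2, I would instead argue inductively. The bound $\|u^e_{k}\|\le\|u^e_{k-1}\|+\max_{\delta\in\mathcal{U}^e_\Delta}\|\delta\|$ alone diverges, so it is insufficient. The induction would therefore also impose the constraint $u^e_{k}\in\mathcal{U}^e$, which is enforced in $\mathbb{P}$ and keeps the bound uniform.
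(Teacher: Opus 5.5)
Your proof is correct and takes the same route as the paper, which gives no separate proof: $x_m,u_m$ come from compactness, and \eqref{eq:17} follows from $X^{e^T}_kX^e_k=\|x^e_k\|^2+\|u^e_k\|^2\le x_m^2+u_m^2\le(2-\alpha)/\lambda$. Your extra step, obtaining a $k$-uniform $u_m$ from $\mathcal{U}^e_k\subseteq\mathcal{U}^e=\mathcal{U}\oplus(-\mathcal{U}^r)$, fills in a point the paper skips when it simply calls the time-varying set $\mathcal{U}^e_k$ a $C_0$-set; it is also the correct reading, because the literal one-sided inequality $H_u\mathbf{u}^e_k\le h_u\mathbf{1}_{Nm}$ in \eqref{eq:u} is not bounded below on its own.
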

\subsection{State Error Induced by Parameter Adaptation}
The system \eqref{eq:incremental} can be rewritten as 
\begin{equation}
  x^e_{k+1}=\hat{A}_kx^e_k+\hat{B}_ku^e_k+\Tilde{x}^e_{k+1}
  \label{eq:es}
\end{equation}
\begin{lem} \label{lem:delx}
    Let $e \triangleq \Theta^d X^e$, where $\Theta^d \in \mathbb{R}^{n\times(n+m)}$ and $X^e=[x^{e^T} , u^{e^T}]^T \in \mathbb{R}^{n+m}$. When $\hat{\Theta}_k$ is updated following \eqref{eq:Adaptive} with $\hat{\Theta}_0\in \Psi$ and $\forall X^e=[x^{e^T} , u^{e^T}]^T,x^e\in\mathcal{X}^e$, and $u^e\in \mathcal{U}^e_k$, then the bound on $\Tilde{x}^e_k$ can be given by
\begin{equation}
    \|\tilde{x}^e_k\|\leq \sqrt{\delta_{\tilde{x}^e}}, \forall k \in \mathbb{I}^+
\end{equation}
where $\delta_{\Tilde{x}^e}$ can be computed from $\delta_{\Tilde{x}^e}=\text{max}\{e^Te|e\in \Gamma\}$, where $\Gamma\triangleq\{\Theta^d X^e|\Theta^d \in \mathcal{B}_{d_{\Theta}}(\textbf{0}),X^e=[x^{e^T} , u^{e^T}]^T,x^e\in \mathcal{X}^e,u^e\in\mathcal{U}^e_k\},d_\Theta=\text{diam}(\Psi)$, and $\textbf{0}\in \mathbb{R}^{n\times(n+m)}$ is the zero vector. 
\end{lem}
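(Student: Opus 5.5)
The plan is to write the prediction error explicitly in the form covered by the set $\Gamma$, and then bound its norm by the maximum of $e^Te$ over $\Gamma$. By the definitions following \eqref{eq:15}, $\tilde{x}^e_{k}=\tilde{\Theta}_{k-1}X^e_{k-1}$ with $\tilde{\Theta}_{k-1}=\Theta-\hat{\Theta}_{k-1}$, for every $k\in\mathbb{I}^+$. Hence $\tilde{x}^e_k$ is exactly of the form $e=\Theta^dX^e$ with the choices $\Theta^d=\tilde{\Theta}_{k-1}$ and $X^e=X^e_{k-1}$. It therefore suffices to show that this particular pair $(\Theta^d,X^e)$ is admissible in the definition of $\Gamma$. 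Once that is done, $\tilde{x}^{e^T}_k\tilde{x}^e_k\le\max\{e^Te\,|\,e\in\Gamma\}=\delta_{\tilde{x}^e}$, and taking square roots gives the claim.

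The first ingredient is the parameter-error bound $\tilde{\Theta}_{k-1}\in\mathcal{B}_{d_\Theta}(\textbf{0})$. By Assumption~\ref{assum:lumped}, $\Theta\in\Psi$. Because of the projection in \eqref{eq:Adaptive} and the initialization $\hat{\Theta}_0\in\Psi$, a one-line induction gives $\hat{\Theta}_k\in\Psi$ for all $k\in\mathbb{I}$: the base case is the initialization, and each update returns $\mathrm{Proj}_\Psi(\cdot)\in\Psi$, which is well defined since $\Psi$ is a nonempty closed convex polytope. Both $\Theta$ and $\hat{\Theta}_{k-1}$ therefore lie in $\Psi$, so by the definition of the diameter (Frobenius norm) $\|\Theta-\hat{\Theta}_{k-1}\|_F\le\mathrm{diam}(\Psi)=d_\Theta$. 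This means $\tilde{\Theta}_{k-1}\in\mathcal{B}_{d_\Theta}(\textbf{0})$.

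The second ingredient is the regressor membership, which comes from the hypothesis on $X^e$. The closed loop keeps $x^e_{k-1}\in\mathcal{X}^e$ and $u^e_{k-1}\in\mathcal{U}^e_{k-1}$, so $X^e_{k-1}$ belongs to the regressor set used in $\Gamma$.

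The main obstacle is the time index on the input set: $\Gamma$ is written with $\mathcal{U}^e_k$, but the relevant input is $u^e_{k-1}\in\mathcal{U}^e_{k-1}$. I would handle this in one of two ways. One option is to read $\Gamma$ as taken over the union or worst case over $k$ of the sets $\mathcal{U}^e_k$. The other is to note that every $\mathcal{U}^e_k$ is contained in the fixed $C_0$-set $\mathcal{U}^e$, and use the bound $u_m$ from Proposition~\ref{propos:xm}; this gives a uniform, $k$-independent bound.

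It remains to check that the maximum defining $\delta_{\tilde{x}^e}$ exists and is finite. The product set $\mathcal{B}_{d_\Theta}(\textbf{0})\times\mathcal{X}^e\times\mathcal{U}^e_k$ is compact, and the map $(\Theta^d,X^e)\mapsto\|\Theta^dX^e\|^2$ is continuous, so the maximum is attained. As an explicit upper estimate, $\delta_{\tilde{x}^e}\le d_\Theta^2(x_m^2+u_m^2)$, using $\|\Theta^dX^e\|\le\|\Theta^d\|_F\|X^e\|$. Combining the two ingredients yields $\|\tilde{x}^e_k\|\le\sqrt{\delta_{\tilde{x}^e}}$ for all $k\in\mathbb{I}^+$.
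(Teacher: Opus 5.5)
The paper states this lemma without a proof of its own (it is carried over from \cite{Dhar2021}), and your argument is the standard one behind it and is correct: $\tilde{x}^e_k=\tilde{\Theta}_{k-1}X^e_{k-1}$, and the projection keeps $\hat{\Theta}_{k-1}$ in $\Psi$ together with $\Theta$, so $\tilde{\Theta}_{k-1}\in\mathcal{B}_{d_\Theta}(\textbf{0})$ and hence $\tilde{x}^e_k\in\Gamma$. On the index mismatch you flag, you do not need to enlarge $\Gamma$: the constant sequence $(u^e_{k-1},\dots,u^e_{k-1})$ satisfies the $H_u$ block (as $u^e_{k-1}$ did at time $k-1$) and has zero increments, which are admissible because $\mathcal{U}^e_\Delta$ contains the origin, so it lies in $\mathcal{U}^e_k$ and $u^e_{k-1}$ is already covered by $\Gamma$ as written.
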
 
\begin{lem} \label{lem:del}
    When $\hat{\Theta}_k$ is updated based on \eqref{eq:Adaptive} and for all instants the constraints \eqref{eq:IC} and \eqref{eq:RIC} are satisfied, then the following is true
\begin{equation}
    \|\Delta\hat{\Theta}_{k+1}X^e\|\leq\delta_\Theta=(2-\alpha)\sqrt{\delta_{\Tilde{x}^e}}
\end{equation}
$\forall X^e=[x^{e^T} , u^{e^T}]^T,x^e\in\mathcal{X}^e$, and $u^e\in \mathcal{U}^e_k$, where $\delta_{\Tilde{x}^e}$ and $\alpha$ are obtained from \textit{Lemma} \ref{lem:delx} and \textit{Proposition} \ref{propos:xm}, respectively.
\end{lem}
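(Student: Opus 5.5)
The plan is to bound the increment $\Delta\hat{\Theta}_{k+1}\triangleq\hat{\Theta}_{k+1}-\hat{\Theta}_k$ using only the projected update \eqref{eq:Adaptive}, and then to apply it to an arbitrary admissible regressor $X^e$. There are three ingredients: non-expansiveness of the projection, the step-size condition of Proposition \ref{propos:xm}, and the bound on $\tilde{x}^e$ from Lemma \ref{lem:delx}.

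\textbf{Step 1 (removing the projection).} Since $\hat{\Theta}_k\in\Psi$ and $\Psi$ is convex, we have $\hat{\Theta}_k=\text{Proj}_\Psi(\hat{\Theta}_k)$. Orthogonal projection onto a closed convex set is non-expansive in the Frobenius norm, so
\begin{equation*}
\|\Delta\hat{\Theta}_{k+1}\|_F=\|\text{Proj}_\Psi(\hat{\Theta}_k+\lambda\tilde{x}^e_{k+1}X^{e^T}_k)-\text{Proj}_\Psi(\hat{\Theta}_k)\|_F\leq\lambda\|\tilde{x}^e_{k+1}X^{e^T}_k\|_F=\lambda\|\tilde{x}^e_{k+1}\|\,\|X^e_k\|.
\end{equation*}
The last equality holds because the matrix is rank one.

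\textbf{Step 2 (applying it to a generic regressor).} For any $X^e=[x^{e^T},u^{e^T}]^T$ with $x^e\in\mathcal{X}^e$ and $u^e\in\mathcal{U}^e_k$, we have $\|\Delta\hat{\Theta}_{k+1}X^e\|\leq\|\Delta\hat{\Theta}_{k+1}\|_F\|X^e\|$. Combining with Step 1 gives
\begin{equation*}
\|\Delta\hat{\Theta}_{k+1}X^e\|\leq\lambda\|\tilde{x}^e_{k+1}\|\,\|X^e_k\|\,\|X^e\|.
\end{equation*}
The hypothesis that \eqref{eq:IC} and \eqref{eq:RIC} hold at all instants ensures that the closed-loop $u^e_k$ lies in $\mathcal{U}^e_k$. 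Together with $x^e_k\in\mathcal{X}^e$, Proposition \ref{propos:xm} then gives $\|X^e_k\|^2\leq x_m^2+u_m^2\leq(2-\alpha)/\lambda$, and the same bound holds for $\|X^e\|^2$. Hence $\lambda\|X^e_k\|\|X^e\|\leq\lambda\cdot(2-\alpha)/\lambda=2-\alpha$.

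\textbf{Step 3 (closing the bound).} Lemma \ref{lem:delx} gives $\|\tilde{x}^e_{k+1}\|\leq\sqrt{\delta_{\tilde{x}^e}}$, which yields $\|\Delta\hat{\Theta}_{k+1}X^e\|\leq(2-\alpha)\sqrt{\delta_{\tilde{x}^e}}=\delta_\Theta$.

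The main point requiring care is Step 2. The closed-loop regressor $X^e_k$ must actually lie in the admissible set to which Proposition \ref{propos:xm} applies, and the constraint-satisfaction hypothesis is what guarantees this. The bound $\|X^e\|\leq\sqrt{x_m^2+u_m^2}$ must also hold uniformly over the time-varying set $\mathcal{U}^e_k$, which relies on $u_m$ being a uniform bound as stated in Proposition \ref{propos:xm}. A secondary subtlety is that the projection is non-expansive in the Frobenius inner product; since Lemma \ref{lem:delx} measures diameters in $\|\cdot\|_F$ as well, the norms used throughout are consistent.
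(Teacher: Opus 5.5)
Your proof is correct and is the standard argument for this bound. The paper states the lemma without proof (it mirrors the corresponding result in \cite{Dhar2021}), and your chain produces exactly the constant $(2-\alpha)\sqrt{\delta_{\tilde{x}^e}}$: non-expansiveness of $\text{Proj}_\Psi$ in the Frobenius norm via $\hat{\Theta}_k=\text{Proj}_\Psi(\hat{\Theta}_k)$, the rank-one identity, Proposition~\ref{propos:xm} applied to both $X^e_k$ and $X^e$, and Lemma~\ref{lem:delx}.

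One point to make explicit: the hypothesis names only \eqref{eq:IC} and \eqref{eq:RIC}, but your Step 2 and Lemma~\ref{lem:delx} both also need the closed-loop state error to satisfy $x^e_k\in\mathcal{X}^e$. State-constraint satisfaction therefore has to be read into the hypothesis, as you already do implicitly.
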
  
\par\textit{Definition 2:} The $C_0-$set containing the errors (defined in \textit{lemmas} \ref{lem:delx} and \ref{lem:del}) are defined as $W_{\Tilde{x}^e}=\{\Tilde{x}^e_k|\|\Tilde{x}^e_k\|\leq \sqrt{\delta_{\Tilde{x}^e}},\Tilde{x}^e=x^e_k-\hat{x}^e_k,\forall x^e_k,\hat{x}^e_k\in \mathbb{R}^n,\forall k\in \mathbb{I}^+\}$ and $W_{\Theta}=\{\Delta \hat{\Theta}X^e |\|\Delta \hat{\Theta}X^e\|\leq \delta_\Theta, \Delta \hat{\Theta}=\hat{\Theta}_i-\hat{\Theta}_j,\forall \hat{\Theta}_{i,j}\in \Psi, \forall i,j\in \mathbb{I}^+,X^e \in \mathbb{R}^{(n+m)},X^{eT}X^e\leq \frac{2-\alpha}{\lambda}\}$.
\section{Controller Design}
This section provides insight into the controller design, including the formulation of the state feedback gain, the reformulation of input and input rate constraints in the error dynamics domain, the development of the terminal set, the formulation of the MPC optimization problem, and the analysis of recursive feasibility and closed-loop stability.
\subsection{State Feedback Gain}
\begin{assum}\label{asm:state feedback}
    Let $(P,K)\in \mathbb{R}^{n\times n} \times \mathbb{R}^{m \times n}$ be a pair associated with a positive definite matrix $Q \in \mathbb{R}^{n \times n}$ such that $\forall[\hat{A}_k,\hat{B}_k]\in \Psi$ and $\forall k \in \mathbb{I}^+$ the following holds
\begin{equation}
    P>0; \quad (\hat{A}_k+\hat{B}_kK)^TP(\hat{A}_k+\hat{B}_kK)-P+Q<0 
    \label{eq:P}
\end{equation}
 Moreover, there exists a set $\zeta \subset \mathcal{X}^e$ associated with a constant $\gamma \in (0,1)$ such that:
\end{assum}
\begin{equation}
    z_{k+1}^TPz_{k+1}<\gamma^2z_k^TPz_k, \text{where} 
    \label{eq:21}
\end{equation}
\begin{equation}
    z_{k+1}=(\hat{A}_k+\hat{B}_kK)z_k, \nonumber
\end{equation}
\begin{equation}
    \forall ([\hat{A}_k,\hat{B}_k],z_k,Kz_k,k)\in \Psi \times \zeta \times \mathcal{U}\times \mathbb{I}^+
\end{equation}
Condition \eqref{eq:21} assumes that $\zeta$ is a $\gamma$-contractive set with respect to the stabilizing feedback gain $K$, ensuring that $z_{k+1} \in \gamma \zeta$ for all $z_k \in \zeta$, where $\gamma \in (0,1)$. For a given $\gamma$, the $\gamma-$contractive set can be determined using the procedure outlined in \cite{Dhar2021}. The control input $u^e_k$ can be chosen as 
\begin{equation}
     u^e_k=K x^e_k+ v^e_k
    \label{eq:dual}
\end{equation}
\par\textit{Remark 1:} In the same lines, the control input rate can be chosen as 
\begin{equation}
    \Delta u^e_k=K\Delta x^e_k+ \Delta v^e_k
    \label{eq:24}
\end{equation}
The MPC algorithm developed in the subsequent section computes the control input $v^e_k$. Substituting \eqref{eq:dual} in \eqref{eq:incremental}, the following is obtained
\begin{equation}
    x^e_{k+1}= A^sx^e_k+Bv^e_k=\Theta^sX_k^v
    \label{eq:uncertain}
\end{equation}
where $A^s=A+BK$, $\Theta^s= [A^s,B]$, and $X_k^v=\begin{bmatrix}
    x^{eT}_k & v^{eT}_k
\end{bmatrix}^T$. In a similar manner, \eqref{eq:estimate} and \eqref{eq:compact estimate} can be modified as follows:
\begin{equation}
    \hat{x}^e_{k+1}=\hat{A}^s_kx^e_k+\hat{B}_k v^e_k=\hat{\Theta}^s_kX_k^v
\end{equation}
where $\hat{A}^s_k=\hat{A}_k+\hat{B}_kK$ and $\hat{\Theta}^s_k=\begin{bmatrix}
    \hat{A}^s_k & \hat{B}_k
\end{bmatrix}$. \\
\par \textit{Remark 2:} It is defined in \textit{Assumption} \ref{assum:lumped} that $\Psi$ is a $C-$set and $\Theta,\hat{\Theta}\in \Psi$ and K is a finite gain matrix, then there is a $C-$set $\Psi^a$ such that $A^s,\hat{A}^s_k\in \Psi^a, \forall k \in \mathbb{I}$, where $\Psi^a$ is 
\begin{equation}
    \Psi^a=co\{\psi^a_1,\psi^a_2,\cdots,\psi^a_L\};\psi^a_i\triangleq A_i+B_iK, \forall i \in \mathbb{I}_L^+
\end{equation}
where $\begin{bmatrix}
    A_i & B_i
\end{bmatrix},\forall i \in \mathbb{I}_L^+$ is defined in \textit{Assumption} \ref{assum:lumped}.
\subsection{Constraint Reformulation} 
\begin{lem}\label{lem:input rate}
    If \eqref{eq:se} is satisfied, then the state increment $\Delta x^e_k$ is bounded as
\begin{equation}
    \|\Delta x^e_k\| \le d_x:= \mathrm{diam}(\mathcal{X}^e), \quad \forall k \in \mathbb{I}^+
    \label{eq:xbound}
\end{equation}
\end{lem}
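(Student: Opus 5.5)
The argument mirrors the proof of \textit{Lemma} \ref{lem:ubound}, with $\mathcal{X}^e$ in place of $\mathcal{U}^r$. First I will check that $d_x$ is finite. By the problem statement, $\mathcal{X}$ is a $C_0$-set. By \textit{Assumption} \ref{assum:ref}, $\mathcal{X}^r$ is the convex hull of finitely many points, so it is compact and convex. The set $\mathcal{X}^e$ in \eqref{eq:se} equals $\mathcal{X}\oplus(-\mathcal{X}^r)$, the Minkowski sum of two compact convex sets. It is therefore compact and convex, and $d_x=\mathrm{diam}(\mathcal{X}^e)$ is a well-defined finite maximum. The maximum is attained because the norm is continuous on the compact set $\mathcal{X}^e\times\mathcal{X}^e$.

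Next, fix any $k\in\mathbb{I}^+$ and write $\Delta x^e_k=x^e_k-x^e_{k-1}$. The hypothesis that \eqref{eq:se} holds at every instant gives $x^e_k\in\mathcal{X}^e$ and $x^e_{k-1}\in\mathcal{X}^e$. For $k=1$, this requires the initial error $x^e_0$ to lie in $\mathcal{X}^e$, which I will state explicitly as part of the hypothesis. The pair $(x^e_k,x^e_{k-1})$ is then admissible in the maximization defining the diameter. Hence $\|\Delta x^e_k\|\le\max\{\|a-b\|:a,b\in\mathcal{X}^e\}=d_x$. The paper defines the diameter with $\|\cdot\|_F$, and for vectors this coincides with the Euclidean norm, so the bound holds in the norm used in \eqref{eq:xbound}. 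Since $k$ was arbitrary, \eqref{eq:xbound} holds for all $k\in\mathbb{I}^+$, and equivalently $\Delta x^e_k\in\mathcal{B}_{d_x}(\bar{\textbf{0}})$.

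No step is a real obstacle. The only points needing care are the compactness of $\mathcal{X}^e$, which guarantees that $d_x$ is finite, and the requirement that the constraint hold at the initial instant, so that the case $k=1$ is covered.
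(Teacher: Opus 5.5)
Your proof is correct and matches the paper's approach. The paper gives no separate proof of this lemma and relies on the same one-line argument as the proof of \textit{Lemma}~\ref{lem:ubound}: both $x^e_k$ and $x^e_{k-1}$ lie in $\mathcal{X}^e$, so their difference is bounded by its diameter. Your extra checks are sound refinements that the paper leaves implicit, namely that $\mathcal{X}^e=\mathcal{X}\oplus(-\mathcal{X}^r)$ is compact so $d_x$ is finite, and that $x^e_0\in\mathcal{X}^e$ is needed to cover $k=1$.
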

\par \textit{Definition 3:} The $C_0-$set containing the state rate (defined in \textit{Lemma} \ref{lem:input rate}) is $\mathcal{X}^e_\Delta =\{\Delta x^e_k |\|\Delta x^e_k\|\leq d_x, \Delta x^e_k=x^e_k-x^e_{k-1},\forall x^e_k\in \mathbb{R}^n,\forall k\in \mathbb{I}^+\}$ .
\\
\begin{lem}\label{lem:ue}
    At the $k^{th}$ instant if \eqref{eq:ue} and \eqref{eq:del} holds, then for all $i\in \mathbb{I}_{N-1}$ the following can be inferred
\begin{equation}
    u^e_{k+i\mid k}\hspace{-0.4em}\in\mathcal{U}^e_k:=\hspace{-0.4em}\left\{\, \hspace{-0.4em} \textbf{u}^e_k \in \mathbb{R}^{Nm}\hspace{-0.4em}\;\middle|\;
\begin{aligned}
\hspace{-0.4em}H_u \textbf{u}^e_k &\le h_u\pmb 1_{Nm}\\
\hspace{-0.4em}H_\Delta \textbf{u}^e_k &\le h^u_\Delta \pmb 1_{Nm} \hspace{-0.4em}+\bar{H} u^e_{k-1}
\end{aligned}\hspace{-0.4em}\right\} \label{eq:ueoverall}
\end{equation} 
where 
\begin{align}
    &H_u=\begin{bmatrix}
    I_m &0_m &\cdots &0_m\\
    0_m &I_m &\cdots &0_m\\
    \vdots &\vdots &\ddots &\vdots \\
    0_m & 0_m &\cdots &I_m
\end{bmatrix}\in \mathbb{R}^{Nm \times Nm} \nonumber \\
    & H_\Delta\hspace{-0.02in}=\hspace{-0.02in}\begin{bmatrix}
    I_m & 0_m &\cdots & 0_m & 0_m & 0_m\\
    -I_m & I_m &\cdots & 0_m & 0_m & 0_m\\
    \vdots &\vdots &\ddots &\vdots &\vdots &\vdots\\
    0_m & 0_m &\cdots &-I_m &I_m &0_m \\
    0_m & 0_m &\cdots &0_m &-I_m &I_m 
\end{bmatrix}\hspace{-0.02in}\in \hspace{-0.02in}\mathbb{R}^{Nm\times Nm} \nonumber \\
    & h_u\in \mathbb{R}; \quad h^u_\Delta\in \mathbb{R}  \nonumber \\
    & \bar{H}=\begin{bmatrix}
    I_m & \textbf{0}_{Nm-1\times m}
\end{bmatrix}^T\in \mathbb{R}^{Nm\times m}\nonumber
\end{align}
$\textbf{u}^e_k =\{u^e_{k|k},u^e_{k+1|k}\cdots,u^e_{k+N-1|k}\}^T$, $u^e_{k-1}$ is the control input of the preceding instant and $\pmb 1_{Nm}=\begin{bmatrix}
    1, 1, \cdots, 1
\end{bmatrix}^T\in \mathbb{R}^{Nm}$.
\end{lem}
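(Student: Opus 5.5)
We stack the per-step constraints over the horizon and rewrite them as linear inequalities on the stacked vector $\textbf{u}^e_k$. Since the paper writes $h_u$ and $h^u_\Delta$ as scalars, we take $\mathcal{U}^e$ and $\mathcal{U}^e_\Delta$ to be, or to be inner-approximated by, boxes in the infinity norm. Such a representation exists because both are $C_0$-sets: $\mathcal{U}^e$ is a Minkowski-type combination of the $C_0$-set $\mathcal{U}$ and the compact set $\mathcal{U}^r$, and $\mathcal{U}^e_\Delta$ combines $\mathcal{U}_\Delta$ with the ball $\mathcal{U}^r_\Delta$ of radius $d_u$ from Lemma~\ref{lem:ubound}. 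Concretely, we fix $h_u>0$ with $\{u:\|u\|_\infty\le h_u\}\subseteq\mathcal{U}^e$ and $h^u_\Delta>0$ with $\{\Delta u:\|\Delta u\|_\infty\le h^u_\Delta\}\subseteq\mathcal{U}^e_\Delta$, written one-sidedly as $u\le h_u\pmb 1_m$ in the paper's notation. The symmetric lower bounds are handled the same way with $-H_u$ and $-H_\Delta$, and we will note this.

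\emph{Input constraint.} For each $i\in\mathbb{I}_{N-1}$, the constraint \eqref{eq:ue} applied to the predicted input gives $u^e_{k+i\mid k}\le h_u\pmb 1_m$. The $(i+1)$-th block row of $H_u$ selects exactly $u^e_{k+i\mid k}$ from $\textbf{u}^e_k$, because $H_u$ is block-identity. Concatenating the $N$ block inequalities therefore yields $H_u\textbf{u}^e_k\le h_u\pmb 1_{Nm}$.

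\emph{Input-rate constraint.} We first write the predicted increments as $\Delta u^e_{k+i\mid k}=u^e_{k+i\mid k}-u^e_{k+i-1\mid k}$ for $i\ge 1$. For $i=0$, the increment is $\Delta u^e_{k\mid k}=u^e_{k\mid k}-u^e_{k-1}$, where $u^e_{k-1}$ is the input actually applied at the preceding instant, which is known at time $k$. Applying \eqref{eq:del} to each increment gives $\Delta u^e_{k+i\mid k}\le h^u_\Delta\pmb 1_m$.
\begin{itemize}
\item For the first block row, this reads $u^e_{k\mid k}\le h^u_\Delta\pmb 1_m+u^e_{k-1}$. This is the first block row of $H_\Delta\textbf{u}^e_k$, namely $I_m u^e_{k\mid k}$, with the extra term $\bar H u^e_{k-1}$ on the right-hand side. $\bar H$ injects $u^e_{k-1}$ only into the first $m$ components.
\item For rows $i\ge1$, the $(i+1)$-th block row of $H_\Delta$ has $-I_m$ in column $i$ and $I_m$ in column $i+1$. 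It therefore produces exactly $u^e_{k+i\mid k}-u^e_{k+i-1\mid k}$, with a zero contribution from $\bar H$.
\end{itemize}
Stacking these rows gives $H_\Delta\textbf{u}^e_k\le h^u_\Delta\pmb 1_{Nm}+\bar H u^e_{k-1}$. Together with the input constraint, this shows that $\textbf{u}^e_k$ lies in the set defined in \eqref{eq:ueoverall}.

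The main obstacle is justifying the scalar-bound description, i.e.\ showing that $\mathcal{U}^e$ and $\mathcal{U}^e_\Delta$ are polytopes representable, or inner-approximable, by the uniform bounds $h_u$ and $h^u_\Delta$. This follows from compactness of $\mathcal{U}^r$ (Proposition~\ref{propos:xrbound}, with $\Psi$ and $\mathcal{X}^r$ compact and the pseudo-inverse map continuous) and from the origin lying in the interiors of $\mathcal{U}$ and $\mathcal{U}_\Delta$. The remaining subtle point is the time-varying dependence on $u^e_{k-1}$, which makes the admissible set $\mathcal{U}^e_k$ depend on $k$. We will state this dependence explicitly, since it is the source of the time-varying admissible control set used in the later feasibility analysis.
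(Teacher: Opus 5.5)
Your stacking argument matches the paper's proof. You use block-identity $H_u$ for the input bound and the differencing $\Delta u^e_{k+i\mid k}=u^e_{k+i\mid k}-u^e_{k+i-1\mid k}$ with $u^e_{k-1\mid k}=u^e_{k-1}$. You also have $\bar H$ carrying $u^e_{k-1}$ into the first block row of $H_\Delta$. The paper does not justify the scalar bounds at all. It writes $\mathcal{U}^e$ and the rate set as $\{\textbf{u}^e_k \mid H_u\textbf{u}^e_k\le h_u\pmb 1_{Nm}\}$ and $\{\Delta\textbf{u}^e_k \mid H_u\Delta\textbf{u}^e_k\le h^u_\Delta\pmb 1_{Nm}\}$, treats these descriptions as exact, and stacks.

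The justification you add for $h_u,h^u_\Delta$ points the wrong way, and as written it breaks your first deduction. The lemma is an implication \emph{from} \eqref{eq:ue} and \eqref{eq:del} \emph{to} the stacked inequalities. So per step you need an outer bound, $\mathcal{U}^e\subseteq\{u \mid u\le h_u\pmb 1_m\}$, and likewise for $\mathcal{U}^e_\Delta$. You instead fix $h_u$ with $\{u \mid \|u\|_\infty\le h_u\}\subseteq\mathcal{U}^e$. From an inner box, the step ``\eqref{eq:ue} gives $u^e_{k+i\mid k}\le h_u\pmb 1_m$'' does not follow. For example, take $m=1$, $\mathcal{U}^e=[-3,5]$ and $h_u=3$: the point $u=4$ lies in $\mathcal{U}^e$ but violates $u\le h_u$. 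The same failure occurs for $h^u_\Delta$ and \eqref{eq:del}.

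An inner box, together with the lower bounds you mention, gives the converse implication: stacked constraints $\Rightarrow$ original constraints. That is what the later MPC analysis needs, but it is not what this lemma asserts. Accordingly, the origin-in-interior property you invoke plays no role here. Compactness alone yields a valid outer bound, e.g.\ $h_u=\max_{u\in\mathcal{U}^e}\|u\|_\infty$.

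You can repair the proof in one of two ways:
\begin{itemize}
\item Adopt the paper's implicit assumption that $\mathcal{U}^e$ and $\mathcal{U}^e_\Delta$ are exactly described by these scalar inequalities. Then both directions hold and your argument goes through unchanged.
\item Use outer bounds. The lemma then holds as stated, but $\mathcal{U}^e_k$ is only a necessary condition and cannot by itself certify $u^e\in\mathcal{U}^e$ downstream.
\end{itemize}
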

\begin{proof} At the $k^{th}$ instant for all $ i\in \mathbb{I}_{N-1}$, the predicted control inputs $u^e_{k+i\mid k}$ must satisfy \eqref{eq:ue} for the system \eqref{eq:incremental}
\begin{equation}
    u^e_{k+i\mid k}\in \mathcal{U}^e:=\{\textbf{u}^e_k\;|\;H_u \textbf{u}^e_k \le h_u\pmb 1_{Nm}\}
    \label{eq:u}
\end{equation}
 At the $k^{th}$ instant for all $ i\in \mathbb{I}_{N-1}$, the predicted incremental control inputs $\Delta u^e_{k+i\mid k}$ must satisfy \eqref{eq:del} for the system \eqref{eq:incremental}
\begin{equation}
    \Delta u^e_{k+i\mid k} \in \mathcal{U}_\Delta : =\{\Delta \textbf{u}^e_k|H_u \Delta \textbf{u}^e_k \leq h^u_\Delta \pmb 1_{Nm}\}
    \label{eq:10}
\end{equation}
For the entire prediction horizon $\Delta u^e_{k+i\mid k}$ can be reformulated as
\begin{equation}
    \Delta u^e_{k+i\mid k}= u^e_{k+i\mid k}- u^e_{k+i-1\mid k}, \quad \forall i \in \mathbb{I}_{N-1}
    \label{eq:12}
\end{equation}
and for $i=0 \Rightarrow u^e_{k-1|k}=u^e_{k-1}$, which is the control input of the preceding instant and is measured at the $k^{th}$ instant. Utilizing \eqref{eq:10} and \eqref{eq:12} the following can be inferred
\begin{equation}
    u^e_{k+i\mid k}\in\mathcal{U}_\Delta:=\{\textbf{u}^e_k|H_\Delta \textbf{u}^e_k\leq h^u_\Delta \pmb 1_{Nm} + \bar{H}u^e_{k-1}\}
    \label{eq:13}
\end{equation}
The solution of the COCP $\mathbb{P}$, $\textbf{u}^e_k$ must satisfy \eqref{eq:u} and \eqref{eq:13} for all $i\in\mathbb{I}_{N-1}$
\begin{equation}
    u^e_{k+i\mid k}\hspace{-0.4em}\in \mathcal{U}^e_k:=\hspace{-0.4em}\left\{\,\hspace{-0.4em}\textbf{u}^e_k \in \mathbb{R}^{Nm} \;\hspace{-0.4em}\middle|\;
\begin{aligned}
\hspace{-0.4em}H_u \textbf{u}^e_k &\le h_u\pmb 1_{Nm}\\
\hspace{-0.4em}H_\Delta \textbf{u}^e_k &\le h^u_\Delta\pmb 1_{Nm} + \bar{H} u^e_{k-1}
\end{aligned}\hspace{-0.4em}\right\}\nonumber 
\end{equation}
This concludes the proof.
\end{proof}
\par \textit{Remark 3:} The admissible set $\mathcal{U}^e_k$ is time varying due to its dependency on previous input $u^e_{k-1}$, but it still remains a convex polytope. 
\begin{lem}\label{lem:ve}
    At the $k^{th}$ instant if \eqref{eq:se}, \eqref{eq:ue} and \eqref{eq:del} holds, then the following can be inferred for all $i\in\mathbb{I}_{N-1}$ 
\begin{equation}
    \hspace{-0.4em}v^e_{k+i\mid k}\hspace{-0.4em}\in \mathcal{V}^e_k:=\hspace{-0.4em}\left\{\,\hspace{-0.4em}\textbf{v}^e_k \in \mathbb{R}^{Nm}\hspace{-0.4em}\;\middle|\;
\begin{aligned}
\hspace{-0.4em}H_u \textbf{v}^e_k &\le h_v\pmb 1_{Nm}\\
\hspace{-0.4em}H_\Delta \textbf{v}^e_k &\le h^v_\Delta\pmb 1_{Nm} + \bar{H} v^e_{k-1}
\end{aligned}\hspace{-0.4em}\right\}
\end{equation}
where $h_v,h^v_\Delta\in\mathbb{R}$, $v^e_{k-1}$ is the optimized control input of the preceding instant and $\textbf{v}^e_k=\{v^e_{k\mid k},v^e_{k+1\mid k},\cdots,v^e_{k+N-1\mid k}\}^T$.
\end{lem}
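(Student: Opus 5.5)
The proof follows the template of the proof of \textit{Lemma} \ref{lem:ue}, with $u^e$ replaced by $v^e$ through the decomposition \eqref{eq:dual} and its rate counterpart \eqref{eq:24}. The work is in turning the bounds on $u^e$ and $\Delta u^e$ into bounds on $v^e$ and $\Delta v^e$ by absorbing the feedback contribution $Kx^e$ (respectively $K\Delta x^e$). This absorption is possible because $x^e$ is confined to the bounded set $\mathcal{X}^e$ by \eqref{eq:se}, and $\Delta x^e$ is confined to $\mathcal{X}^e_\Delta$ by \textit{Lemma} \ref{lem:input rate}.

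First, for every $i\in\mathbb{I}_{N-1}$ write $v^e_{k+i\mid k}=u^e_{k+i\mid k}-Kx^e_{k+i\mid k}$. \textit{Lemma} \ref{lem:ue} gives $u^e_{k+i\mid k}\in\mathcal{U}^e$, and \eqref{eq:se} gives $x^e_{k+i\mid k}\in\mathcal{X}^e$. Hence $v^e_{k+i\mid k}\in\mathcal{U}^e\oplus(-K\mathcal{X}^e)$. This set is a $C_0$-set, since both summands are compact and convex and contain the origin. Because all constraints in \textit{Lemma} \ref{lem:ue} are written componentwise as box bounds with scalars $h_u$ and $h^u_\Delta$, I will take the smallest box containing this Minkowski sum. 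Concretely, I set
\begin{equation*}
h_v:=h_u+\max_{x^e\in\mathcal{X}^e}\|Kx^e\|_\infty\le h_u+\|K\|_\infty x_m ,
\end{equation*}
with $x_m$ from \textit{Proposition} \ref{propos:xm}. Stacking over $i$ then gives $H_u\textbf{v}^e_k\le h_v\pmb 1_{Nm}$.

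Second, from \eqref{eq:24}, $\Delta v^e_{k+i\mid k}=\Delta u^e_{k+i\mid k}-K\Delta x^e_{k+i\mid k}$. Here $\Delta u^e_{k+i\mid k}$ satisfies the componentwise bound $h^u_\Delta$ from \eqref{eq:10}, and $\|\Delta x^e\|\le d_x$ by \textit{Lemma} \ref{lem:input rate}. Therefore $\|\Delta v^e_{k+i\mid k}\|_\infty\le h^v_\Delta:=h^u_\Delta+\|K\|_\infty d_x$, where I use $\|\cdot\|_\infty\le\|\cdot\|$ on $\Delta x^e$. Expanding the increments as $\Delta v^e_{k+i\mid k}=v^e_{k+i\mid k}-v^e_{k+i-1\mid k}$, exactly as in \eqref{eq:12}, with the convention $v^e_{k-1\mid k}=v^e_{k-1}$ (the input actually applied at the previous instant), reproduces the banded matrix $H_\Delta$ and the offset $\bar H v^e_{k-1}$. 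The result is $H_\Delta\textbf{v}^e_k\le h^v_\Delta\pmb 1_{Nm}+\bar H v^e_{k-1}$. Intersecting the two polyhedral descriptions yields $\mathcal{V}^e_k$. As in Remark 3, this set is a time-varying convex polytope.

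I expect the main obstacle to be the $i=0$ rate term. There, $\Delta x^e_{k\mid k}=x^e_k-x^e_{k-1}$ involves the measured previous state rather than a predicted one, so I must check that \textit{Lemma} \ref{lem:input rate} applies to it. Since $x^e_{k-1}$ was admissible in $\mathcal{X}^e$, the diameter bound $d_x$ still covers this increment. A second subtlety is whether the offset should be exactly $\bar H v^e_{k-1}$ or should carry an extra $K\Delta x^e$ term. I will handle this by absorbing that term into the enlarged scalar $h^v_\Delta$ rather than keeping it in the offset. This makes $\mathcal{V}^e_k$ a valid outer description of the admissible set, with the indicated $h_v$ and $h^v_\Delta$; it is not claimed to be exact.
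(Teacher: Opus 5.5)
Your bounds are valid as \emph{necessary} conditions, but they go in the opposite direction from the paper's construction. The resulting set cannot do the job that $\mathcal{V}^e_k$ does in the rest of the paper. The paper does not enlarge $\mathcal{U}^e$ by $-K\mathcal{X}^e$; it tightens it. It takes $\mathcal{V}^e\triangleq\mathcal{U}^e\ominus K\mathcal{X}^e$ in \eqref{eq:34} and $\mathcal{V}^e_\Delta\triangleq\mathcal{U}^e_\Delta\ominus K\mathcal{X}^e_\Delta$ in \eqref{eq:delv}, both Pontryagin differences. It then stacks the increments exactly as you do to get the $H_\Delta$, $\bar H v^e_{k-1}$ form. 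The tightening is what makes $\mathcal{V}^e_k$ usable as the decision-variable constraint in $\mathbb{P}_1$ and $\mathbb{P}_2$. Any $v^e$ in it, together with $x^e\in\mathcal{X}^e$ and $\Delta x^e\in\mathcal{X}^e_\Delta$, yields $u^e=Kx^e+v^e$ satisfying \eqref{eq:ue} and \eqref{eq:del}. The proof of \textit{Lemma}~\ref{lem:Recursive Feasibility} uses exactly this step: $\textbf{v}^e_k\in\mathcal{V}^e_k$ together with \eqref{eq:dual} and \eqref{eq:24} implies $u^e_{k+j-1\mid k}\in\mathcal{U}^e_k$. With your choice $h_v=h_u+\max_{x^e\in\mathcal{X}^e}\|Kx^e\|_\infty>h_u$ that implication fails. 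Take $x^e=0$ and $v^e$ with a component equal to $h_v$; then $u^e=v^e$ violates the bound $h_u$. Your outer description would therefore let the optimizer return inputs that break the original input and input-rate constraints.

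In fairness, the lemma's wording (\emph{can be inferred}) and the first sentence of the paper's proof literally read in your direction. In that direction the tightened membership does not follow: with $x^e=0$ and $u^e$ having a component equal to $h_u$, $v^e=u^e$ violates any bound $h_v<h_u$. But the content the paper needs and builds is the sufficiency statement, and your proposal is missing that argument. Set, for instance, $h_v:=h_u-\|K\|_\infty x_m$ and $h^v_\Delta:=h^u_\Delta-\|K\|_\infty d_x$, assumed positive (a compatibility condition on $K$). Then show that $H_u\textbf{v}^e_k\le h_v\pmb 1_{Nm}$ and $H_\Delta\textbf{v}^e_k\le h^v_\Delta\pmb 1_{Nm}+\bar H v^e_{k-1}$ imply the two constraints of \textit{Lemma}~\ref{lem:ue}, using $\|x^e\|_\infty\le x_m$ and $\|\Delta x^e\|_\infty\le d_x$. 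Your increment bookkeeping carries over unchanged once the sign of the $K$-correction is flipped. This includes the $i=0$ term $\Delta u^e_{k\mid k}=K(x^e_k-x^e_{k-1})+(v^e_{k\mid k}-v^e_{k-1})$ with $\|x^e_k-x^e_{k-1}\|\le d_x$.
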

\begin{proof} Utilizing \eqref{eq:dual}, if \eqref{eq:se} and \eqref{eq:ue} holds then at $k^{th}$ instant for all $i\in \mathbb{I}_{N-1}$, the following can be inferred
\begin{equation}
    v^e_{k+i\mid k} \in \mathcal{V}^e\triangleq\mathcal{U}^e\ominus k\mathcal{X}^e:=\{\textbf{v}^e_k|H_u \textbf{v}^e_k\leq h_v\pmb 1_{Nm}\} 
    \label{eq:34}
\end{equation}
In the same lines, at $k^{th}$ instant for all $i\in \mathbb{I}_{N-1}$ utilizing \eqref{eq:24} and \textit{Definition 3}, the following can be inferred for $\Delta v^e_{k+i\mid k}$, if \eqref{eq:del} holds
\begin{equation}
    \Delta v^e_{k+i\mid k}\in\mathcal{V}^e_\Delta \triangleq\mathcal{U}^e_\Delta \ominus K\mathcal{X}^e_\Delta:=\{\Delta \textbf{v}^e_k| H_u\Delta \textbf{v}^e_k\leq h^v_\Delta\pmb 1_{Nm}\}
    \label{eq:delv}
\end{equation}
where $\Delta \textbf{v}^e_k=\{\Delta v^e_{k\mid k},\Delta v^e_{k+1\mid k},\cdots,\Delta v^e_{k+N-1\mid k}\}^T$. For the entire prediction horizon, $\Delta v^e_{k+i\mid k}$ can be reformulated as 
\begin{equation}
    \Delta v^e_{k+i\mid k}= v^e_{k+i\mid k}- v^e_{k+i-1\mid k}, \quad \forall i \in \mathbb{I}_N
    \label{eq:36}
\end{equation}
$v^e_{k-1\mid k}=v^e_{k-1}$ is the optimal control input of the preceding instant and is a measured value at the $k^{th}$ instant. From \eqref{eq:delv} and \eqref{eq:36} the following can be obtained
\begin{equation}
    v^e_{k+i\mid k}\in\mathcal{V}^e_\Delta:=\{\textbf{v}^e_k|H_\Delta \textbf{v}^e_k\leq h^v_\Delta\pmb 1_{Nm} + \bar{H}v^e_{k-1}\}
    \label{eq:37}
\end{equation}
Therefore, a combined constraint at $k^{th}$ instant for all $i\in \mathbb{I}_{N-1}$ on $v^e_{k+i\mid k}$ can be given by
\begin{equation}
    v^e_{k+i\mid k}\hspace{-0.4em}\in \mathcal{V}^e_k:=\hspace{-0.4em}\left\{\hspace{-0.4em}\, \textbf{v}^e_k \in \mathbb{R}^{Nm}\hspace{-0.4em}\;\middle|\;
\begin{aligned}
\hspace{-0.4em}H_u \textbf{v}^e_k &\le h_v\pmb 1_{Nm}\\
\hspace{-0.4em}H_\Delta \textbf{v}^e_k &\le h^v_\Delta\pmb 1_{Nm} + \bar{H} v^e_{k-1}
\end{aligned}\hspace{-0.4em}\right\}
\end{equation}
This concludes the proof.
\end{proof}
\par For the uncertain system \eqref{eq:uncertain}, the objective function $J_k$ is defined by $J_k=\sum_{i=0}^{N-1}\frac{1}{2}[\|x^e_{k+i \mid k}\|_Q^2+\|v^e_{k+i \mid k}\|_R^2]+J_f( x^e_{k+N \mid k})$, where Q and R are positive definite matrices. The terminal constraint $J_f$ is defined as $J_f\triangleq\frac{1}{2}\| x^e_{k+N \mid k}\|_P^2$, where P is chosen from \eqref{eq:P}. The COCP $\mathbb{P}_1$ for the uncertain system \eqref{eq:uncertain} is defined as 
\begin{equation}
    \mathbb{P}_1: \textbf{v}^{e^*}_k=\text{argmin}_{v^e_k}J_k \nonumber
\end{equation}
\begin{subequations}
\begin{align}
    \text{s.t.} \quad &x^e_{k+i \mid k} 
    = A^s x^e_{k+i-1 \mid k} + Bv^e_{k+i-1\mid k}, 
     \forall i \in \mathbb{I}_{N}^+ 
    \label{eq:27a}\\
    & x^e_{k+i \mid k} \in \mathcal{X}^e,   
     v^e_{k+i-1\mid k} \in \mathcal{V}^e_k, \forall i \in \mathbb{I}_N^+
     \\
    &x^e_{k+N \mid k} \in \mathcal{X}_T \subset \mathcal{X}^e  
\end{align}
\end{subequations}
 At $k^{th}$ instant the optimization routine solution is $ \textbf{v}_k^{e^*}=\{v^{e^{*}}_{k \mid k},v^{e^{*}}_{k+1 \mid k},\cdots, v^{e^{*}}_{k+N-1 \mid k}\}^T$ and the terminal set is $\mathcal{X}_T$. Since equation \eqref{eq:27a} depends on uncertain system parameters, it is not implementable. An estimated system model is introduced via the following proposition, which can be used in $\mathbb{P}_1$ as an alternative to \eqref{eq:27a}.
 \begin{propos}
    Based on \eqref{eq:es}, the prediction model \eqref{eq:27a} is equivalent to the following model
\begin{equation}
    x^e_{k+i\mid k}\hspace{-0.4em}=\hspace{-0.4em}\hat{A}^s_{k+i-1}x^e_{k+i-1\mid k}+\hspace{-0.4em}\hat{B}_{k+i-1}v^e_{k+i-1\mid k}+\Tilde{x}^e_{k+i \mid k}, \forall i \in \mathbb{I}_N^+
    \label{eq:pred}
\end{equation}where, $\hat{A}^s_{k+(.)}\hspace{-0.4em}=\hspace{-0.4em}\hat{A}_{k+(.)}+\hat{B}_{k+(.)}K$, $\Tilde{x}^e_{k+i\mid k}\hspace{-0.4em}=\hspace{-0.4em}\Tilde{\Theta}^s_{k+i-1}X^v_{k+i-1\mid k}$, $\Tilde{\Theta}^s_{k+(.)}\hspace{-0.4em}\triangleq\Theta^s-\hat{\Theta}^s_{k+(.)}$, and $X^v_{k+(.)\mid k}=\begin{bmatrix}
    x^{e^T}_{k+(.)\mid k} & v^{e^T}_{k+(.)\mid k}
\end{bmatrix}$.
 \end{propos}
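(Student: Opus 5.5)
The identity is purely algebraic: for each prediction step I will add and subtract the estimated closed-loop dynamics inside the true model \eqref{eq:27a}. The residual is then exactly the prediction error $\Tilde{x}^e_{k+i\mid k}$. This is the prediction-level version of the one-step rewriting \eqref{eq:es}, carried out along the horizon with the estimate indexed at $k+i-1$.

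\emph{Step 1 (one-step identity for $\Theta^s$).} For arbitrary $x\in\mathbb{R}^n$, $v\in\mathbb{R}^m$ and any index $j$, use $\Theta^s=\hat{\Theta}^s_j+\Tilde{\Theta}^s_j$, which holds by the definition $\Tilde{\Theta}^s_j\triangleq\Theta^s-\hat{\Theta}^s_j$. This gives $A^s x+Bv=\Theta^s[x^T,v^T]^T=\hat{\Theta}^s_j[x^T,v^T]^T+\Tilde{\Theta}^s_j[x^T,v^T]^T$. Since $\hat{\Theta}^s_j=[\hat{A}_j+\hat{B}_jK,\;\hat{B}_j]$, the first term equals $\hat{A}^s_jx+\hat{B}_jv$. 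I will also check that this is consistent with \eqref{eq:es} through \eqref{eq:dual}. Indeed, $\Tilde{\Theta}^s_j X^v=(A-\hat{A}_j)x+(B-\hat{B}_j)(Kx+v)=\Tilde{\Theta}_jX^e$, so $\Tilde{x}^e$ has the same meaning in the $v$-coordinates as in the $u$-coordinates.

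\emph{Step 2 (apply along the horizon).} For each $i\in\mathbb{I}_N^+$, set $j=k+i-1$, $x=x^e_{k+i-1\mid k}$ and $v=v^e_{k+i-1\mid k}$. Substituting into \eqref{eq:27a} gives
$x^e_{k+i\mid k}=\hat{A}^s_{k+i-1}x^e_{k+i-1\mid k}+\hat{B}_{k+i-1}v^e_{k+i-1\mid k}+\Tilde{\Theta}^s_{k+i-1}X^v_{k+i-1\mid k}$.
The last term is, by definition, $\Tilde{x}^e_{k+i\mid k}$, which yields \eqref{eq:pred}.

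\emph{Step 3 (equivalence in both directions).} The manipulation is reversible, because adding $\hat{\Theta}^s_{k+i-1}X^v$ back to $\Tilde{\Theta}^s_{k+i-1}X^v$ recovers $\Theta^sX^v$. Hence \eqref{eq:27a} and \eqref{eq:pred} generate the same predicted sequences from the common initial condition $x^e_{k\mid k}=x^e_k$, which an induction on $i$ makes precise.

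\emph{Main obstacle.} The only delicate point is interpretational. The estimates $\hat{\Theta}_{k+i-1}$ for $i\ge2$ are not available at time $k$, so the equivalence is an exact identity for any choice of the matrices $\hat{\Theta}^s_{k+i-1}$ and places no requirement on their values. I will state explicitly that the identity holds for an arbitrary sequence of estimates in $\Psi$; frozen estimates $\hat{\Theta}_k$ are one admissible choice. The boundedness of the residual is not needed for this proposition; it is supplied separately by Lemma~\ref{lem:delx}.
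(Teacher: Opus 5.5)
Your proposal is correct and follows essentially the paper's argument. The paper gives no separate proof beyond ``based on \eqref{eq:es}'', i.e.\ substituting $u^e=Kx^e+v^e$ from \eqref{eq:dual} into $x^e_{k+1}=\hat{A}_kx^e_k+\hat{B}_ku^e_k+\tilde{x}^e_{k+1}$ and shifting along the horizon, which is exactly your add-and-subtract step together with your check $\tilde{\Theta}^s_jX^v=\tilde{\Theta}_jX^e$; your observation that the identity holds for any sequence of estimates also matches Remark~4's point that the model is exact but unimplementable.
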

\par \textit{Remark 4:} At the $k^{th}$ instant for all $i\in \mathbb{I}_N^+$ the knowledge of $\hat{\Theta}^s_{k+i-1}$ and $\Tilde{x}^e_{k+i}$ are not known. Hence \eqref{eq:pred} is unimplementable. 
\begin{propos}
    If \eqref{eq:Adaptive} updates $\hat{\Theta}_k$ with $X^e_{k+i\mid k}$ satisfying \eqref{eq:se} and \eqref{eq:ueoverall}, then $x^e_{k+i\mid k}$ can be segregated as 
\begin{equation}
    x^e_{k+i\mid k}=\bar{s}^e_{k+i\mid k}+\epsilon_{k+i\mid k}+e_{k+i\mid k}, \forall i\in \mathbb{I}_N^+
    \label{eq:sepa}
\end{equation}
where $\epsilon_{k+i\mid k}\in \mathbb{Z}_i^{\Theta}$ and $e_{k+i\mid k}\in \mathbb{Z}^{\Tilde{x}^e}$ capture the errors induced by parameter estimation and disturbance term $\Tilde{x}^e_{k+i\mid k}$ respectively.
\end{propos}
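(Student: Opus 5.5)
Starting from the prediction model \eqref{eq:pred}, I would define a nominal prediction that uses only the current estimate $\hat{\Theta}_k$, which is available at time $k$:
\begin{equation*}
\bar{s}^e_{k+i\mid k}=\hat{A}^s_k\bar{s}^e_{k+i-1\mid k}+\hat{B}_k v^e_{k+i-1\mid k},\qquad \bar{s}^e_{k\mid k}=x^e_k .
\end{equation*}
The plan is to write every future estimate as $\hat{\Theta}^s_{k+j}=\hat{\Theta}^s_k+\sum_{l=1}^{j}\Delta\hat{\Theta}^s_{k+l}$ and to split the deviation $x^e_{k+i\mid k}-\bar{s}^e_{k+i\mid k}$ into two parts. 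The first part, $\epsilon$, is driven by the parameter increments $\Delta\hat{\Theta}^s_{k+l}X^v$. The second part, $e$, is driven by the disturbance $\tilde{x}^e_{k+i\mid k}$. Both propagate through the same nominal closed-loop map $\hat{A}^s_k$. This gives the recursions
\begin{align*}
\epsilon_{k+i\mid k}&=\hat{A}^s_k\epsilon_{k+i-1\mid k}+\Big(\textstyle\sum_{l=1}^{i-1}\Delta\hat{\Theta}^s_{k+l}\Big)X^v_{k+i-1\mid k},\\
e_{k+i\mid k}&=\hat{A}^s_k e_{k+i-1\mid k}+\tilde{x}^e_{k+i\mid k},
\end{align*}
with $\epsilon_{k\mid k}=e_{k\mid k}=0$.

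\textbf{Step 1: the decomposition is exact.} Adding the three recursions and comparing with \eqref{eq:pred} shows that \eqref{eq:sepa} holds identically for every $i\in\mathbb{I}_N^+$. This step is purely algebraic.

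\textbf{Step 2: bound the inputs to each recursion.} By hypothesis $X^e_{k+i\mid k}$ satisfies \eqref{eq:se} and \eqref{eq:ueoverall}. Proposition \ref{propos:xm} then gives \eqref{eq:17} along the whole prediction, so Lemmas \ref{lem:delx} and \ref{lem:del} apply at every step. Hence $\tilde{x}^e_{k+i\mid k}\in W_{\tilde{x}^e}$, and each increment satisfies $\Delta\hat{\Theta}_{k+l}X^e\in W_\Theta$.

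\textbf{Step 3: build the sets by Minkowski sums.} Since $\hat{A}^s_k\in\Psi^a$ by Remark 2, I take the union over $\Psi^a$, or equivalently the image under its vertices using convexity. I would define
\begin{equation*}
\mathbb{Z}^{\Theta}_i=\bigoplus_{j=1}^{i-1}\Big(\Psi^a\,\mathbb{Z}^{\Theta}_{j}\oplus (j)\,W_\Theta\Big),
\end{equation*}
or more simply $\mathbb{Z}^\Theta_i=\bigoplus_{j=0}^{i-1}\Psi^{a\,i-1-j}\,(j\,W_\Theta)$. I would also set $\mathbb{Z}^{\tilde{x}^e}=\bigoplus_{j=0}^{N-1}\Psi^{a\,j}W_{\tilde{x}^e}$, taken as the largest such set over the horizon so that it is independent of $i$. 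Induction on $i$ then gives $\epsilon_{k+i\mid k}\in\mathbb{Z}^\Theta_i$ and $e_{k+i\mid k}\in\mathbb{Z}^{\tilde{x}^e}$. Each set is a $\mathcal{C}$-set because it is built from finitely many Minkowski sums and linear images of balls.

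\textbf{Main obstacle.} The hard step is justifying that the increment term $\big(\sum_l\Delta\hat{\Theta}^s_{k+l}\big)X^v$ lies in a scaled copy of $W_\Theta$. Lemma \ref{lem:del} is stated for $\Delta\hat{\Theta}X^e$, not for the $K$-transformed increment $\Delta\hat{\Theta}^s X^v$. I would handle this by noting that
\begin{equation*}
\Delta\hat{\Theta}^s X^v=\Delta\hat{A}\,x^e+\Delta\hat{B}(Kx^e+v^e)=\Delta\hat{\Theta}X^e,
\end{equation*}
because $u^e=Kx^e+v^e$ by \eqref{eq:dual}. Lemma \ref{lem:del} then applies directly.

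I would also avoid the products of time-varying $\hat{A}^s_{k+l}$ in the recursions. This is done by keeping the nominal map fixed at $\hat{A}^s_k$ and pushing all changes of the estimate into $\epsilon$. As a result, $\epsilon$ grows at most linearly in $j$ at each step, which is why the scaling $j\,W_\Theta$ appears.
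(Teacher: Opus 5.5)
Your argument is correct. It also supplies a proof the paper does not give: the paper only states the decomposition and defers the construction of $\mathbb{Z}^{\Theta}_i$ and $\mathbb{Z}^{\tilde{x}^e}$ to Section V-B of \cite{Dhar2021}. Your three-way split is exact. It consists of a nominal model frozen at $\hat{\Theta}_k$, as in \eqref{eq:nominal}; a drift term driven by $(\hat{\Theta}^s_{k+i-1}-\hat{\Theta}^s_k)X^v_{k+i-1\mid k}$; and a disturbance term driven by $\tilde{x}^e_{k+i\mid k}$. The identity $\Delta\hat{\Theta}^sX^v=\Delta\hat{\Theta}X^e$ is what makes Lemma~\ref{lem:del} usable under the stated hypothesis on $X^e_{k+i\mid k}$.

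Where your route likely differs from what the paper relies on is $\mathbb{Z}^{\tilde{x}^e}$. You reset $\bar{s}^e_{k\mid k}=x^e_k$, propagate $e$ with the fixed $\hat{A}^s_k$, and get an $i$-independent set from the $N$-step Minkowski sum. The paper points elsewhere:
it assumes contractivity uniformly over time-varying $\hat{A}^s\in\Psi^a$ (Assumption~\ref{asm:state feedback});
it lists $x^e_k\in\mathcal{X}^e$ and $\bar{s}^e_{k\mid k}\in\mathcal{X}^e_0$ as separate hypotheses in Lemma~\ref{lem:Recursive Feasibility};
and in Corollary 2 it treats $e_k$ as a closed-loop error that must be shown to converge.
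Together these suggest an invariant, RPI-type set for $e$ under $\hat{A}^s_{k+i-1}$, which stays valid when $e$ is carried across time steps rather than reset. Your finite sum is simpler, explicitly computable, and enough for the proposition on $i\in\mathbb{I}_N^+$. It is not invariant, though, so on its own it would not support the shifted-sequence arguments used later for recursive feasibility and convergence.

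Three small repairs are needed.

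\textbf{The recursion for $\mathbb{Z}^{\Theta}_i$.} Your first formula is not the recursion your $\epsilon$ obeys. That recursion is $\mathbb{Z}^{\Theta}_1=\{0\}$, $\mathbb{Z}^{\Theta}_i=\Psi^a\mathbb{Z}^{\Theta}_{i-1}\oplus(i-1)W_\Theta$, which unrolls to your closed form. What you wrote contains this set, since every summand contains the origin, so it is only a looser bound.

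\textbf{Vertices and convexity.} For $j\ge 2$ the vertex description of $\Psi^{a\,j}W$ is an outer approximation, obtained from the multilinearity of $M_1\cdots M_jw$; it is not an equivalence. Sets of the form $\Psi^aW$ also need not be convex, so take convex hulls to obtain $\mathcal{C}$-sets.

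\textbf{A tighter drift bound.} Because $\hat{\Theta}_{k+j},\hat{\Theta}_k\in\Psi$, the drift $(\hat{\Theta}_{k+j}-\hat{\Theta}_k)X^e$ also lies in $\Gamma$, hence in $W_{\tilde{x}^e}$. The linearly growing bound $j\,W_\Theta$ is therefore only sharper while $j(2-\alpha)<1$.
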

\par \textit{Remark 5:} The sets $\mathbb{Z}_i^{\Theta}$ and $\mathbb{Z}^{\tilde{x}^e}$ are obtained from the sets $W_{\tilde{x}^e}$ and $W_\Theta$ defined in \textit{Definition 2}. The computation follows the methodology presented in [\textit{Section V-B}, \cite{Dhar2021}].
\subsection{Terminal Constraints}
The system given by $ x^e_{k+1}=\hat{A}^s_k x^e_k,\hat{A}^s_k\in \Psi^a, \forall k\in \mathbb{I}^+$ can be claimed to be exponentially stable by utilizing \textit{Assumption }\ref{asm:state feedback}. There is a need to obtain $\gamma-$contractive set $\mathcal{X}^e_T$ as follows
\begin{equation}
    \mathcal{X}^e_T\subset \mathcal{X}^e\ominus \mathbb{Z}^{\Tilde{x}^e}
\end{equation}
\begin{equation}
    \hat{A}^s_k s^e_k\in \gamma\mathcal{X}^e_T, \forall  s^e_k\in \mathcal{X}^e_T, \forall \hat{A}^s_k \in \Psi^a, \forall k \in \mathbb{I}
\end{equation}
The value of $\gamma$ is chosen between $(0,1)$. If $K(\mathcal{X}^e_T\oplus\mathbb{Z}^{\Tilde{x}^e})\subset \mathcal{U}^e_k$, only then is $\mathcal{X}^e_T$ said to be feasible.
\par The optimized control input sequence $\textbf{v}^e_k$ is subjected to the following terminal constraint
\begin{equation}
    v^e_{k+N-1\mid k}\geq - h^v_\Delta \pmb 1_m
    \label{eq:terminal}
\end{equation}
where $\pmb 1_m=\begin{bmatrix}
    1, 1, \cdots, 1
\end{bmatrix}^T\in \mathbb{R}^m$. 
\subsection{Reformulated MPC Problem}
The constraints on $\bar{s}^e_{k+i \mid k}$ and $\bar{s}^e_{k+N \mid k}$ after tightening are given by
\begin{equation}
    \begin{aligned}
        \bar{s}^e_{k+i \mid k} &\in\mathcal{X}^e_i\triangleq \mathcal{X}^e\ominus \mathbb{Z}_i^{\Theta}\ominus \mathbb{Z}^{\tilde{x}^e},  \quad \forall i \in \mathbb{I}_{N-1}^+\\
        \bar{s}^e_{k+N \mid k} &\in \bar{\mathcal{X}}^e_T \triangleq \mathcal{X}^e_T \ominus \mathbb{Z}_N^{\Theta}\subset \mathcal{X}^e\ominus \mathbb{Z}_N^{\Theta}\ominus \mathbb{Z}^{\tilde{x}^e} 
    \end{aligned}
    \label{eq:43}
\end{equation}
The MPC optimization routine designed for the uncertain system is defined as follows 
\begin{equation}
   \begin{aligned}
        \bar{s}^e_{k+i \mid k}&=\hat{A}^s_k\bar{s}^e_{k+i-1 \mid k}
         +\hat{B}_kv^e_{k+i-1 \mid k}, \forall i \in \mathbb{I}_N^+
    \label{eq:nominal}
   \end{aligned}
\end{equation}
The cost function for the system \eqref{eq:nominal} is defined as $J^s_k=\sum_{i=0}^{N-1}\frac{1}{2}[\|\bar{s}^e_{k+i \mid k}\|_Q^2+\| v^e_{k+i \mid k}\|_R^2]+J^s_f(\bar{s}^e_{k+N \mid k})$, where $J^s_f(\bar{s}^e_{k+N \mid k})\triangleq\frac{1}{2}\|\bar{s}^e_{k+N \mid k}\|_P^2$. The modified COCP $\mathbb{P}_1$ is as follows 
\begin{equation} 
    \mathbb{P}_2: v^{e^*}_k=\text{argmin}_{v^e_k}J_k^s \nonumber
\end{equation}
\begin{subequations}
\begin{align}
    \text{s.t. } &\bar{s}^e_{k+i \mid k} 
    = \hat{A}^s_k  \bar{s}^e_{k+i-1 \mid k} + \hat{B}_k v^e_{k+i-1 \mid k}, 
     \forall i \in \mathbb{I}_{N}^+ 
    \label{eq:36a}\\
    &\bar{s}^e_{k+i \mid k} \in \mathcal{X}^e_i,
     v^e_{k+i-1\mid k} \in \mathcal{V}^e_k, \quad \forall i\in \mathbb{I}_{N}^+
     \\
    &v^e_{k+N-1\mid k}\geq -h^v_\Delta\textbf{1}_m,  \bar{s}^e_{k+N \mid k} \in \bar{\mathcal{X}}^e_T  
\end{align}
\end{subequations}
\subsection{Recursive Feasibility}
\begin{lem}\label{lem:feasible}
    If $\mathbb{P}_2$ is initially feasible at $k^{th}$ instant, then there exists a control input sequence $\textbf{v}^e_{k+1}$ at $k+1^{th}$ instant such that
\begin{equation}
    \hspace{-0.4em}\textbf{v}^e_{k+1}\hspace{-0.4em}  \in \mathcal{V}^e_{k+1}\hspace{-0.4em} :=\left\{\,\hspace{-0.4em}  \textbf{v}^e_{k+1} \in \mathbb{R}^{Nm}\hspace{-0.4em}  \;\middle|\;
\begin{aligned}
\hspace{-0.4em} H_u \textbf{v}^e_{k+1} &\le h_v\pmb 1_{Nm}\\
\hspace{-0.4em} H_\Delta \textbf{v}^e_{k+1} &\le h^v_\Delta\pmb 1_{Nm} + \bar{H} v^e_{k}
\end{aligned}\hspace{-0.4em}
\right\}
\label{eq:recur}
\end{equation}
where $\textbf{v}^e_{k+1}\triangleq\{v^{e}_{k+1\mid k},\cdots,v^{e}_{k+N-1\mid k},0_m\}^T$, and $v^e_k$ is the optimal control input of the $k^{th}$ instant. 
\end{lem}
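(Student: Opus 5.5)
The plan is to take the shifted candidate $\textbf{v}^e_{k+1}=\{v^{e^*}_{k+1\mid k},\dots,v^{e^*}_{k+N-1\mid k},0_m\}^T$ built from the optimal solution $\textbf{v}^{e^*}_k$ of $\mathbb{P}_2$. We then check the two blocks of inequalities defining $\mathcal{V}^e_{k+1}$ row by row. Throughout, we use that the applied input at instant $k$ is $v^e_k=v^{e^*}_{k\mid k}$, so that $\bar{H}v^e_k$ is the quantity entering the first rate row at time $k+1$.

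\emph{Magnitude block.} Since $H_u$ is block-diagonal with identity blocks, $H_u\textbf{v}^e_{k+1}\le h_v\pmb 1_{Nm}$ splits into componentwise bounds on each entry. For the first $N-1$ entries these are inherited from feasibility at instant $k$, because $\textbf{v}^{e^*}_k\in\mathcal{V}^e_k$ gives $v^{e^*}_{k+i\mid k}\le h_v\pmb 1_m$ for every $i$. The last entry is $0_m\le h_v\pmb 1_m$, which holds because $\mathcal{V}^e$ in \eqref{eq:34} is a $C_0$-set and so $h_v\ge 0$.

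\emph{Rate block, first and middle rows.} The first row of $H_\Delta\textbf{v}^e_{k+1}\le h^v_\Delta\pmb 1_{Nm}+\bar{H}v^e_k$ reads $v^{e^*}_{k+1\mid k}-v^{e^*}_{k\mid k}\le h^v_\Delta\pmb 1_m$. This is exactly the second row of the rate constraint already satisfied at instant $k$, once $v^e_k=v^{e^*}_{k\mid k}$ is substituted. Rows $2,\dots,N-1$ are differences $v^{e^*}_{k+i\mid k}-v^{e^*}_{k+i-1\mid k}$ for $i=2,\dots,N-1$, which are likewise rows $3,\dots,N$ of the constraint at instant $k$.

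\emph{Rate block, last row (main obstacle).} This is the only genuinely new constraint:
\begin{equation*}
0_m-v^{e^*}_{k+N-1\mid k}\le h^v_\Delta\pmb 1_m,
\end{equation*}
equivalently $v^{e^*}_{k+N-1\mid k}\ge -h^v_\Delta\pmb 1_m$. It is not implied by the one-sided $H_\Delta$ rows from instant $k$. It is, however, precisely the terminal constraint \eqref{eq:terminal} imposed in $\mathbb{P}_2$, so feasibility at $k$ yields it directly. The point to spell out in the proof is that the appended zero creates a jump $-v^{e^*}_{k+N-1\mid k}$ that is not controlled by the earlier rows, and \eqref{eq:terminal} was introduced exactly to bound it.

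If two-sided bounds are intended, i.e.\ $\mathcal{V}^e_\Delta$ is symmetric, I would also check the reverse inequality $v^{e^*}_{k+N-1\mid k}\le h^v_\Delta\pmb 1_m$. This needs $h_v\le h^v_\Delta$ or an additional upper terminal condition, and I would note it as a possible gap. With the one-sided form as stated, all rows hold and \eqref{eq:recur} follows.
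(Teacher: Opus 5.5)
Your proposal is correct and matches the paper's proof. You shift the optimal sequence and append $0_m$, inherit the magnitude rows and the first $N-1$ rate rows from feasibility at instant $k$ (with $v^e_k=v^{e^*}_{k\mid k}$ supplying the $\bar{H}v^e_k$ term), and close the only new rate row, $-v^{e^*}_{k+N-1\mid k}\le h^v_\Delta\pmb 1_m$, with the terminal constraint \eqref{eq:terminal}. Your row-by-row bookkeeping, the observation that $0_m\le h_v\pmb 1_m$ needs $h_v\ge 0$, and the remark that a symmetric rate set would also need $v^{e^*}_{k+N-1\mid k}\le h^v_\Delta\pmb 1_m$ (which \eqref{eq:terminal} does not supply) spell out what the paper's sketch leaves implicit.
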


\begin{proof}
The initial feasibility of $\mathbb{P}_2$ at $k^{th}$ instant guarantees $\textbf{v}^e_k\in \mathcal{V}^e_k$. Therefore, at $k+1^{th}$ instant, with $v^e_{k+i|k+1}=v^e_{k+i|k}, \forall i\in \mathbb{I}^+_{N-1}$ and utilizing \eqref{eq:delv} and \eqref{eq:36} the following can be inferred 
\begin{equation}
    H_u(v^e_{k+i|k+1}-v^e_{k+i-1|k+1}) \leq h^v_\Delta\pmb 1_{Nm} ,\quad  \forall i \in \mathbb{I}_{N-1}^+
    \label{eq:53a}
\end{equation}
At $k+1^{th}$ instant, $v^e_k$ is the implemented control input from the preceding instant. Utilizing \eqref{eq:37}, \eqref{eq:53a} can be rewritten as
\begin{equation}
    H_\Delta v^e_{k+i|k+1}\leq h^v_\Delta\pmb 1_{Nm} + \bar{H}v^e_k, \quad \forall i \in \mathbb{I}_{N-1}^+
    \label{eq:54}
\end{equation}
Since $v^e_{k+N|k+1}=0_m$, the following is concluded by invoking \eqref{eq:terminal}, 
\begin{equation}
    v^e_{k+N\mid k+1}-v^e_{k+N-1\mid k+1}\leq h^v_\Delta\textbf{1}_m
\end{equation}
Therefore utilizing \eqref{eq:34}, \eqref{eq:delv}, \eqref{eq:terminal} and \eqref{eq:54}, it can be concluded that \eqref{eq:recur} holds.
\end{proof}
\par \textit{Remark 6:} The term $\bar{H}v^e_k$ in equation \eqref{eq:54} is responsible for making the set $\mathcal{V}^e_{k+1}$ in equation \eqref{eq:recur}(and hence the input constraint) time varying in nature. 
\begin{lem} \label{lem:Recursive Feasibility}
    At the $k^{th}$ instant, if $x^e_k\in \mathcal{X}^e$ and $\bar{s}^e_{k\mid k}\in\mathcal{X}^e_0$, then the feasibility of the COCP $\mathbb{P}_2$ ensures the following
\begin{enumerate}
    \item $x^e_{k+i\mid k}\in \mathcal{X}^e, \forall i \in \mathbb{I}_N^+$ 
    \item $x_{k+i\mid k}\in \mathcal{X}, \forall i \in \mathbb{I}_N^+$ 
\end{enumerate} 
\end{lem}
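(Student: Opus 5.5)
The plan is to use the decomposition \eqref{eq:sepa} together with the tightened constraints \eqref{eq:43}, and then translate the result from error coordinates back to physical coordinates.

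\emph{Step 1: decompose the predicted error state.} Feasibility of $\mathbb{P}_2$ at instant $k$ gives a sequence $\textbf{v}^e_k\in\mathcal{V}^e_k$. Along the nominal dynamics \eqref{eq:nominal} it also gives $\bar{s}^e_{k+i\mid k}\in\mathcal{X}^e_i$ for $i\in\mathbb{I}^+_{N-1}$ and $\bar{s}^e_{k+N\mid k}\in\bar{\mathcal{X}}^e_T$. The decomposition \eqref{eq:sepa} requires its hypothesis that the regressor satisfies \eqref{eq:se} and \eqref{eq:ueoverall}. At $i=0$ this holds because $x^e_k\in\mathcal{X}^e$ and $\textbf{v}^e_k\in\mathcal{V}^e_k$, using Lemma \ref{lem:ve} and \eqref{eq:dual}.

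\emph{Step 2: induct over the horizon.} Suppose $x^e_{k+j\mid k}\in\mathcal{X}^e$ for all $j<i$. Then Proposition \ref{propos:xm} guarantees \eqref{eq:17}, so Lemmas \ref{lem:delx} and \ref{lem:del} apply along the predicted trajectory. Consequently $\tilde{x}^e_{k+i\mid k}\in W_{\tilde{x}^e}$ and the one-step parameter drift lies in $W_\Theta$. By Remark 5 this gives $\epsilon_{k+i\mid k}\in\mathbb{Z}^\Theta_i$ and $e_{k+i\mid k}\in\mathbb{Z}^{\tilde{x}^e}$. By the definition of the Pontryagin difference,
\begin{equation*}
x^e_{k+i\mid k}\in \big(\mathcal{X}^e\ominus\mathbb{Z}^\Theta_i\ominus\mathbb{Z}^{\tilde{x}^e}\big)\oplus\mathbb{Z}^\Theta_i\oplus\mathbb{Z}^{\tilde{x}^e}\subseteq\mathcal{X}^e .
\end{equation*}
At $i=N$ the same inclusion follows from $\bar{\mathcal{X}}^e_T\subset\mathcal{X}^e\ominus\mathbb{Z}^\Theta_N\ominus\mathbb{Z}^{\tilde{x}^e}$, and in fact gives $x^e_{k+N\mid k}\in\mathcal{X}^e_T\oplus\mathbb{Z}^{\tilde{x}^e}$. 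This closes the induction and proves item 1.

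\emph{Step 3: return to physical states.} For item 2, write $x_{k+i\mid k}=x^e_{k+i\mid k}+x^r_{k+i}$ with $x^r_{k+i}\in\mathcal{X}^r$ by Assumption \ref{assum:ref}. The set $\mathcal{X}^e$ in \eqref{eq:se} is defined as $\{x-x^r\}$, which is a Minkowski-type difference rather than a Pontryagin one. Membership $x^e\in\mathcal{X}^e$ therefore only says that $x^e=x-x'$ for \emph{some} $x'\in\mathcal{X}^r$, not necessarily for the actual reference $x^r_{k+i}$. I would handle this by interpreting $x^e_{k+i\mid k}$ as generated from the true reference, i.e.\ $x^e_{k+i\mid k}=x_{k+i\mid k}-x^r_{k+i}$, with the constraint effectively being $x^e_{k+i\mid k}\in\mathcal{X}\ominus\{x^r_{k+i}\}$. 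Alternatively I would tighten, replacing $\mathcal{X}^e$ by $\mathcal{X}\ominus\mathcal{X}^r$ in the tightening so that $x^e\oplus\mathcal{X}^r\subseteq\mathcal{X}$. With either reading, $x_{k+i\mid k}=x^e_{k+i\mid k}+x^r_{k+i}\in\mathcal{X}$.

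\emph{Main obstacle.} I expect the hardest part to be exactly this consistency between the error set and the physical set in Step 3. A secondary difficulty is making rigorous the circularity in Step 2: the bounds on the error terms need the state constraint, and the state constraint needs those bounds. The induction on $i$ resolves this, provided the tube sets $\mathbb{Z}^\Theta_i$ are built as in \cite{Dhar2021}.
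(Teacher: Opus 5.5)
Your argument for item 1 is the paper's argument. You induct along the horizon (the paper's hypothesis \eqref{eq:46}), check admissibility of the predicted regressor so that Lemmas \ref{lem:delx} and \ref{lem:del} place the error terms in $\mathbb{Z}^{\Theta}_i$ and $\mathbb{Z}^{\tilde{x}^e}$, and then combine \eqref{eq:sepa} with the Pontryagin tightening \eqref{eq:43}, including the terminal step. One detail needs fixing. The paper also carries $\Delta x^e_{k+j-1\mid k}\in\mathcal{X}^e_\Delta$ in \eqref{eq:46}, because the hypothesis of \eqref{eq:sepa} is $u^e\in\mathcal{U}^e_k$, and that set includes the rate rows. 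Those rows come from \eqref{eq:24} together with $\mathcal{V}^e_\Delta=\mathcal{U}^e_\Delta\ominus K\mathcal{X}^e_\Delta$, not from \eqref{eq:dual} as your Step 1 says. You also need this at every $j$ in the induction, not only at $i=0$. For $j\ge 1$ it follows from your state hypothesis and Lemma \ref{lem:input rate}. For $j=0$ it requires $x^e_{k-1}\in\mathcal{X}^e$, which you should state as an assumption.

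For item 2 your objection is correct, and it marks a real defect in the paper's proof rather than in yours. The paper passes from \eqref{eq:49} to \eqref{eq:50} only by citing Assumption \ref{assum:ref} and $x^e_k=x_k-x^r_k$. But \eqref{eq:se} makes $\mathcal{X}^e=\mathcal{X}\oplus(-\mathcal{X}^r)$, and then $x^e\in\mathcal{X}^e$ does not give $x^e+x^r_{k+i}\in\mathcal{X}$. For example, take $n=1$, $\mathcal{X}=[-1,1]$ and $\mathcal{X}^r=[-\tfrac12,\tfrac12]$, so $\mathcal{X}^e=[-\tfrac32,\tfrac32]$; then $x^e=\tfrac32$ and $x^r_{k+i}=\tfrac12$ give $x=2\notin\mathcal{X}$. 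So item 2 cannot be obtained from item 1 without changing the constraint set of $\mathbb{P}_2$, and what you actually prove is the lemma for a modified problem.

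Of your two repairs, the second is the right one. Set $\mathcal{X}^e:=\mathcal{X}\ominus\mathcal{X}^r$, which is still a $C_0$-set if $\mathcal{X}^r\subset\mathrm{int}\,\mathcal{X}$. Everything stays time-invariant, so $\mathcal{X}^e_i$, $\mathcal{X}^e_T$, $x_m$ and $d_x$ are simply recomputed, and your item-1 argument runs verbatim. It then gives $x_{k+i\mid k}\in(\mathcal{X}\ominus\mathcal{X}^r)\oplus\mathcal{X}^r\subseteq\mathcal{X}$.

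The first reading is not an interpretation of the existing $\mathbb{P}_2$, because item 1 only certifies membership in the set that is actually tightened in \eqref{eq:43}. Enforcing $x^e_{k+i\mid k}\in\mathcal{X}-x^r_{k+i}$ would require preview of the reference and time-varying tightened and terminal sets. The fixed-set argument behind Theorem \ref{thm:recur} does not cover that case.
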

\begin{proof} At the $k^{th}$ instant, $\textbf{v}^e_k$ is considered to be the solution of $\mathbb{P}_2$. The following has been assumed for a value of i, where $i\in \mathbb{I}_N^+$
\begin{equation}
    x^e_{k+j-1\mid k}\in \mathcal{X}^e, \Delta x^e_{k+j-1\mid k}\in \mathcal{X}^e_\Delta,\forall j \in \mathbb{I}_i^+
    \label{eq:46}
\end{equation}
$\textbf{v}^e_k\in \mathcal{V}^e_k$ is guaranteed as $\mathbb{P}_2$ is initially feasible, which along with \eqref{eq:dual}, \eqref{eq:24} and \eqref{eq:46} implies $u^e_{k+j-1\mid k}\in \mathcal{U}^e_k$. Since \eqref{eq:Adaptive} updates the estimated parameters, and $X^e=[x^{e^T},  u^{e^T}]^T$ where $x^e\in \mathcal{X}^e$ and $u^e\in \mathcal{U}^e_k$ then the following holds for all $j\in \mathbb{I}_i, i\in \mathbb{I}_N^+$
\begin{equation}
    e_{k+j}\in \mathbb{Z}^{\Tilde{x}^e}, \epsilon_{k+j \mid k}\in \mathbb{Z}^\Theta_j 
\end{equation}
For the value of $i\in \mathbb{I}_N^+$ utilized in \eqref{eq:46} the following can be concluded
\begin{equation}
    \bar{s}^e_{k+j\mid k}+\epsilon_{k+j\mid k}+e_{k+j\mid k}\in \bar{s}^e_{k+j\mid k}\oplus\mathbb{Z}^\Theta_j\oplus\mathbb{Z}^{\Tilde{x}^e}
    \label{eq:48}
\end{equation}
At $k^{th}$ instant if $\mathbb{P}_2$ is feasible and \eqref{eq:46} holds, then using \eqref{eq:sepa}, \eqref{eq:43} and \eqref{eq:48} the following can be inferred
\begin{equation}
    x^e_{k+i\mid k}\in \mathcal{X}^e, \text{if } i\in \mathbb{I}_N^+
    \label{eq:49}
\end{equation}
Therefore, if \textit{Assumption 2} holds and $x^e_k\triangleq x_k-x^r_k$, then utilizing \eqref{eq:49} the following is concluded
\begin{equation}
    x_{k+i\mid k}\in \mathcal{X}, \text{if } i\in \mathbb{I}_N^+
    \label{eq:50}
\end{equation}
Hence, if $x^e_{k\mid k}\in \mathcal{X}^e$ and $\bar{s}^e_{k\mid k}\in \mathcal{X}^e_0$ then at $k^{th}$ instant by recursively using \eqref{eq:46}-\eqref{eq:50} the claimed assertions hold. 
\end{proof}
\begin{thm}\label{thm:recur}
    The recursive feasibility of $\mathbb{P}_2$ for $(k+i)^{th}$ instants for all $i\in \mathbb{I}^+$ is guaranteed if the COCP $\mathbb{P}_2$ is feasible at the $k^{th}$ instant.
\end{thm}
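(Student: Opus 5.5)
The proof goes by induction on the horizon index, using the standard shifted-sequence candidate adapted to the time-varying admissible set. Suppose $\mathbb{P}_2$ is feasible at instant $k$ with optimal sequence $\textbf{v}^{e^*}_k=\{v^{e^*}_{k\mid k},\dots,v^{e^*}_{k+N-1\mid k}\}^T$, and only $v^e_k=v^{e^*}_{k\mid k}$ is applied. At instant $k+1$ we propose the candidate $\textbf{v}^e_{k+1}=\{v^{e^*}_{k+1\mid k},\dots,v^{e^*}_{k+N-1\mid k},0_m\}^T$. It then suffices to show that this candidate satisfies every constraint of $\mathbb{P}_2$ posed at $k+1$; the statement for all $(k+i)$, $i\in\mathbb{I}^+$, then follows by induction on $i$.

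The input and input-rate constraints are handled directly by Lemma \ref{lem:feasible}. It shows that the candidate lies in $\mathcal{V}^e_{k+1}$, even though that set depends on the previously applied $v^e_k$ through $\bar{H}v^e_k$ (Remarks 3 and 6). The first block row of $H_\Delta$ is matched by $v^{e^*}_{k+1\mid k}-v^e_k$, which is feasible because it was already a rate constraint at time $k$. The appended zero is admissible by the terminal rate condition \eqref{eq:terminal}. The amplitude bound $H_u\textbf{v}^e_{k+1}\le h_v\pmb 1_{Nm}$ holds because the shifted entries were admissible and $0_m\in\mathcal{V}^e$, since $\mathcal{V}^e$ is a $C_0$-set. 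We must also check the terminal rate condition \eqref{eq:terminal} for the new last entry $0_m$. This holds trivially, because $0_m\ge -h^v_\Delta\pmb 1_m$ whenever $h^v_\Delta\ge0$, which is true because $\mathcal{V}^e_\Delta$ contains the origin.

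The state constraints are the more delicate part. The nominal predictions $\bar{s}^e_{k+1+i\mid k+1}$ under the candidate are generated by the new estimate $\hat{\Theta}_{k+1}$ and the measured $x^e_{k+1}$, not by $\hat{\Theta}_k$. I would write the difference from the shifted old predictions $\bar{s}^e_{k+1+i\mid k}$ as a sum of two contributions. The first is the realised disturbance $\tilde{x}^e_{k+1}\in W_{\tilde{x}^e}$ propagated through the dynamics. The second is the parameter increment $\Delta\hat{\Theta}_{k+1}X^e$, which is bounded by Lemma \ref{lem:del} and lies in $W_\Theta$. By the construction of $\mathbb{Z}^\Theta_i$ and $\mathbb{Z}^{\tilde{x}^e}$ (Remark 5, following \cite{Dhar2021}), the nested tightening in \eqref{eq:43} absorbs these one-step deviations, so $\bar{s}^e_{k+1+i\mid k+1}\in\mathcal{X}^e_i$ for $i\le N-1$. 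For the final step, $\bar{s}^e_{k+N\mid k}\in\bar{\mathcal{X}}^e_T$ together with the zero appended input gives $\bar{s}^e_{k+N+1\mid k+1}=\hat{A}^s_{k+1}\bar{s}^e_{k+N\mid k+1}$. The $\gamma$-contractivity of $\mathcal{X}^e_T$ for all $\hat{A}^s\in\Psi^a$ (Assumption \ref{asm:state feedback}) then gives $\bar{s}^e_{k+N+1\mid k+1}\in\gamma\mathcal{X}^e_T$. The margin $(1-\gamma)\mathcal{X}^e_T$ accommodates the $\mathbb{Z}^\Theta_N$ shift, which yields the terminal constraint. Feasibility of $K(\mathcal{X}^e_T\oplus\mathbb{Z}^{\tilde{x}^e})\subset\mathcal{U}^e_k$ ensures that the implied $u^e$ stays admissible. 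Lemma \ref{lem:Recursive Feasibility} then gives $x^e_{k+1}\in\mathcal{X}^e$ and keeps the hypotheses of Proposition \ref{propos:xm} valid, so the error bounds in Lemmas \ref{lem:delx}–\ref{lem:del} continue to hold at $k+1$.

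The main obstacle is this state-tightening step. Showing that the re-initialised nominal trajectory under $\hat{\Theta}_{k+1}$ remains in the tightened sets requires the specific recursive definition of $\mathbb{Z}^\Theta_i$, and the paper only defers that construction to \cite{Dhar2021}. My first attempt would be to define the sets recursively, with $\mathbb{Z}^\Theta_{i+1}\supseteq\hat{A}^s\mathbb{Z}^\Theta_i\oplus W_\Theta$ for all $\hat{A}^s\in\Psi^a$, and then verify the shift inclusion $\mathcal{X}^e_{i+1}\oplus(\text{one-step deviation})\subseteq\mathcal{X}^e_i$ term by term. The input-rate part, which is the novelty of the theorem, is by contrast fully covered by Lemma \ref{lem:feasible}.
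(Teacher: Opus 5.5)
Your overall route is the paper's. You shift the optimal sequence and append $0_m$, handle the time-varying set $\mathcal{V}^e_{k+1}$ with Lemma~\ref{lem:feasible}, and reduce the rest to the tightened inclusions \eqref{eq:43}, which the paper defers to \cite[Section V-E]{Dhar2021}. Your input-side argument is correct. Your check that the new last entry $0_m$ itself satisfies \eqref{eq:terminal} at $k+1$ is needed to iterate, and the paper leaves it implicit.

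The step that would not go through is your mechanism for the state constraints. You re-initialise the nominal prediction at the measured $x^e_{k+1}$, so $\bar{s}^e_{k+1+i\mid k+1}-\bar{s}^e_{k+1+i\mid k}$ contains the propagated $\tilde{x}^e_{k+1}$. In \eqref{eq:43}, however, the tightening by $\mathbb{Z}^{\tilde{x}^e}$ is the same for every $i$. The index shift $\mathcal{X}^e_{i+1}\to\mathcal{X}^e_i$ therefore releases margin only through $\mathbb{Z}^\Theta_{i+1}$ versus $\mathbb{Z}^\Theta_i$. Your recursion $\mathbb{Z}^\Theta_{i+1}\supseteq\hat{A}^s\mathbb{Z}^\Theta_i\oplus W_\Theta$ reserves nothing for the $\tilde{x}^e$ term, so the inclusion $\mathcal{X}^e_{i+1}\oplus(\text{deviation})\subseteq\mathcal{X}^e_i$ cannot be closed term by term.

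The decomposition \eqref{eq:sepa} is tube-like instead: the nominal state is propagated, not reset, so $\bar{s}^e_{k+1\mid k+1}=\bar{s}^e_{k+1\mid k}$. This is why Lemma~\ref{lem:Recursive Feasibility} assumes $\bar{s}^e_{k\mid k}\in\mathcal{X}^e_0$ separately from $x^e_k\in\mathcal{X}^e$. It is also why the proof of Theorem~\ref{thm:stability} attributes $\epsilon^1_{k+i\mid k}=\bar{s}^e_{k+i\mid k+1}-\bar{s}^e_{k+i\mid k}$ solely to the change $\hat{\Theta}_k\to\hat{\Theta}_{k+1}$. The disturbance stays in the component $e\in\mathbb{Z}^{\tilde{x}^e}$. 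Only $\epsilon^1$, which is bounded via Lemma~\ref{lem:del}, has to be absorbed by the nested $\mathbb{Z}^\Theta_i$.

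The terminal step has two further gaps. First, you need $\epsilon^1_{k+N\mid k}\in\mathbb{Z}^\Theta_N$ so that $\bar{s}^e_{k+N\mid k+1}$, not $\bar{s}^e_{k+N\mid k}$, lies in $\mathcal{X}^e_T$ before you apply contractivity. Second, the inclusion $\gamma\mathcal{X}^e_T\oplus\mathbb{Z}^\Theta_N\subseteq\mathcal{X}^e_T$ (for instance via $\mathbb{Z}^\Theta_N\subseteq(1-\gamma)\mathcal{X}^e_T$) does not follow from contractivity. It is an additional design condition on $\gamma$ and $\mathcal{X}^e_T$ that must be imposed.
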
 
\begin{proof} At $k^{th}$ instant it is assumed that $\mathbb{P}_2$ is feasible and provides a feasible solution $\textbf{v}^e_k$. Hence, at $k^{th}$ instant the following is guaranteed $\bar{s}^e_{k+i\mid k}\in \mathcal{X}^e_i, \forall i \in \mathbb{I}_{N-1}^+$ and $\bar{s}^e_{k+N\mid k}\in\bar{\mathcal{X}}^e_T$, where $\mathcal{X}^e_i$ and $\bar{\mathcal{X}}^e_T$ is defined by \eqref{eq:43}. From \textit{Lemma }\ref{lem:feasible} it is inferred that there exists a control input sequence $\textbf{v}^e_{k+1}\in\mathcal{V}^e_{k+1}$. Hence, the recursive feasibility of $\mathbb{P}_2$ is guaranteed if  \begin{equation}
    \bar{s}^e_{k+1+i\mid k+1}\in \mathcal{X}^e_i, \forall i\in \mathbb{I}_{N-1}; \quad \bar{s}^e_{k+1+N\mid k+1}\in \bar{\mathcal{X}}_T
\end{equation}
The remaining proof proceeds along the same lines as provided in [\textit{Section V-E},\cite{Dhar2021}].
\end{proof}
\par \textit{Corollary 1:} From \textit{lemma }\ref{lem:Recursive Feasibility} it is proved that at $k^{th}$ instant the feasibility of $\mathbb{P}_2$ guarantees $x_{k+i \mid k}\in \mathcal{X}$. Hence, the recursive feasibility of $\mathbb{P}_2$ is guaranteed if there exists a feasible solution that $x_{k+j+i\mid k}\in \mathcal{X}, \forall(i,j)\in \mathbb{I}_N\times\mathbb{I}^+$. 
\subsection{Stability Analysis}
\textit{Remark 7:} Let $\bar{\alpha_1}(\cdot)$ and $\bar{\alpha_2}(\cdot)$ be class $\mathcal{K}$ functions such that:
\begin{equation}
    \bar{\alpha_1}(\|\bar{s}^e_k\|_Q^2+\|v^e_k\|_R^2)\leq J^s_k \leq \bar{\alpha_2}(\|\bar{s}^e_k\|_Q^2+\|v^e_k\|_R^2)
\end{equation}
\begin{thm}\label{thm:stability}
    Provided $\mathbb{P}_2$ is feasible, the closed loop system \eqref{eq:nominal} is bounded and asymptotically converging to zero. 
\end{thm}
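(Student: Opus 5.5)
The plan is the standard Lyapunov argument for MPC, using the optimal cost $J^{s^*}_k$ of $\mathbb{P}_2$ as the Lyapunov candidate for the nominal closed loop \eqref{eq:nominal}. By \textit{Remark 7}, $J^{s^*}_k$ is sandwiched between the class $\mathcal{K}$ functions $\bar{\alpha}_1$ and $\bar{\alpha}_2$ of the stage quantity $\|\bar{s}^e_k\|_Q^2+\|v^e_k\|_R^2$. So it suffices to show a decrease
\[
J^{s^*}_{k+1}-J^{s^*}_k\le -\tfrac{1}{2}\big(\|\bar{s}^e_{k\mid k}\|_Q^2+\|v^{e^*}_{k\mid k}\|_R^2\big),
\]
possibly up to a term that vanishes asymptotically. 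Feasibility at every step is supplied by \textit{Theorem} \ref{thm:recur}.

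\textbf{Step 1 (candidate at $k+1$).} At $k+1$ I use the shifted sequence of \textit{Lemma} \ref{lem:feasible}, $\textbf{v}^e_{k+1}=\{v^{e^*}_{k+1\mid k},\dots,v^{e^*}_{k+N-1\mid k},0_m\}$. This sequence lies in $\mathcal{V}^e_{k+1}$. The associated state sequence is generated by \eqref{eq:36a} with $\hat{\Theta}_{k+1}$ and starts from $\bar{s}^e_{k+1\mid k+1}$. Since $v$ is the perturbation on top of $u^e=Kx^e+v^e$, a zero terminal $v$ means the last step uses the terminal controller $K$.

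\textbf{Step 2 (cost comparison).} I write $\tilde J_{k+1}$ for the cost of this candidate; optimality gives $J^{s^*}_{k+1}\le\tilde J_{k+1}$. I then split $\tilde J_{k+1}-J^{s^*}_k$ into three pieces:
\begin{itemize}
\item[(a)] The telescoped stage costs, which cancel except for the negative first stage term $-\tfrac12(\|\bar{s}^e_{k\mid k}\|_Q^2+\|v^{e^*}_{k\mid k}\|_R^2)$.
\item[(b)] The terminal piece $\tfrac12\big(\|\bar s_{k+N}\|_Q^2+\|(\hat A^s_{k+1})\bar s_{k+N}\|_P^2-\|\bar s_{k+N}\|_P^2\big)$. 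This is nonpositive by \eqref{eq:P} of \textit{Assumption} \ref{asm:state feedback}, which holds for every $\hat\Theta\in\Psi$ (guaranteed by the projection \eqref{eq:Adaptive}). The invariance $\hat A^s\bar{\mathcal X}^e_T\subset\gamma\mathcal X^e_T$ keeps the terminal constraint satisfied.
\item[(c)] A mismatch term. It arises because the predicted states at $k+1$ use $\hat\Theta_{k+1}$ and a re-initialized $\bar s^e_{k+1\mid k+1}$ instead of $\hat\Theta_k$ and $\bar s^e_{k+1\mid k}$.
\end{itemize}

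\textbf{Step 3 (bounding the mismatch).} By \eqref{eq:es} and \eqref{eq:sepa}, the mismatch in each predicted state is driven by $\Delta\hat\Theta_{k+1}X^e$ and $\tilde x^e_{k+1}$. \textit{Lemma} \ref{lem:del} and \textit{Lemma} \ref{lem:delx} bound these respectively. \textit{Theorem} \ref{thm: Adaptive law constraint} (via \textit{Proposition} \ref{propos:xm}) shows $\tilde x^e_k\to0$, and the gradient law then gives $\|\Delta\hat\Theta_{k+1}X^e\|\le\lambda\|\tilde x^e_{k+1}\|\|X^e\|^2\to0$. Boundedness of all quantities follows from $\mathcal{X}^e$ and $\mathcal{V}^e_k$ being compact. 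Using the Lipschitz property of quadratics on compact sets, the mismatch is bounded by $c\,\|\tilde x^e_{k+1}\|$ for some constant $c$. I then sum the resulting inequality over $k$. Since $\sum_k\|\tilde x^e_{k+1}\|^2<\infty$ follows from the standard summability argument in the proof of \textit{Theorem} \ref{thm: Adaptive law constraint} in \cite{zhu2015adaptive}, and $J^{s^*}\ge0$, I conclude $\sum_k(\|\bar s^e_k\|_Q^2+\|v^e_k\|_R^2)<\infty$. This gives $\bar s^e_k\to0$ and $v^e_k\to0$, and $J^{s^*}_k$ stays bounded, hence so do the states. Boundedness of the physical states then follows from $x_k=x^e_k+x^r_k$ together with \textit{Assumption} \ref{assum:ref} and \textit{Lemma} \ref{lem:Recursive Feasibility}.

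\textbf{Main obstacle.} I expect Step 3 to be the hardest part. Summability as stated holds for $\|\tilde x^e\|^2$, whereas the linear bound in Step 3 involves $\|\tilde x^e\|$ itself. My first attempt is to bound the mismatch quadratically, by a Young-inequality split that absorbs half of the cross term into the negative stage cost. If that fails, the fallback is an input-to-state-stability argument: treat the mismatch as a vanishing disturbance and invoke the comparison lemma to get asymptotic convergence.
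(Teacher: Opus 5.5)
Your Steps 1 and 2 coincide with the paper's proof. Both use $V_k=J^s_k$, the shifted sequence of Lemma~\ref{lem:feasible} with $0_m$ appended, and telescoping of the stage costs. Both apply \eqref{eq:P} to the terminal step and collect all model mismatch in $\epsilon^1_{k+i\mid k}=\bar s^e_{k+i\mid k+1}-\bar s^e_{k+i\mid k}$, bounded linearly in $\|\epsilon^1\|$ as in the paper's $\rho(\|L_k\|)$ of \eqref{eq:stab}.

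\textbf{The gap} is the closing argument you actually state in Step 3. As you suspect, the cross terms $\bar s^{e\top}_{k+i\mid k}Q\epsilon^1_{k+i\mid k}$ and $\bar s^{e\top}_{k+N\mid k}\hat A^{s\top}_{k+1}P\hat A^s_{k+1}\epsilon^1_{k+N\mid k}$ are only $O(\|\tilde x^e_{k+1}\|)$. Square-summability of $\tilde x^e$ says nothing about $\sum_k\|\tilde x^e_{k+1}\|$ (think $\|\tilde x^e_k\|\sim 1/k$), so $\sum_k(\|\bar s^e_k\|_Q^2+\|v^e_k\|_R^2)<\infty$ does not follow.

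The Young-inequality patch does not close this as described either. The cross terms involve the predicted states $\bar s^e_{k+i\mid k}$, $i=1,\dots,N$, not $\bar s^e_{k\mid k}$. After Young, the quadratic remainder $\eta\sum_i\|\bar s^e_{k+i\mid k}\|^2$ can only be bounded by $c\,\eta V_k$. Absorbing that into $-\tfrac12(\|\bar s^e_{k\mid k}\|_Q^2+\|v^e_{k\mid k}\|_R^2)$ needs a \emph{linear} bound $V_k\le c_2(\|\bar s^e_{k\mid k}\|_Q^2+\|v^e_{k\mid k}\|_R^2)$. Remark~7 only supplies a class-$\mathcal K$ function $\bar\alpha_2$, so you would have to prove the linear bound separately. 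One way is local feasibility of $(v^e_{k\mid k},0_m,\dots,0_m)$ near the origin, plus compactness away from it.

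Your fallback is exactly how the paper finishes. From Remark~7, $\|\bar s^e_{k\mid k}\|_Q^2+\|v^e_{k\mid k}\|_R^2\ge\bar\alpha_2^{-1}(V_k)$. This gives $\Delta V_{k+1}\le-\tfrac12\bar\alpha_2^{-1}(V_k)+\rho(\|L_k\|)$ (the minus sign is missing in \eqref{eq:lya}), with $\rho$ bounded and $\rho\to0$, and the paper stops there. To complete that step:
\begin{enumerate}
\item Use boundedness of $V_k$, which follows from compactness of $\mathcal X^e$ and of the admissible inputs.
\item Fix $\varepsilon>0$ and wait until $\rho_k\le\delta$, where $4\delta\le\bar\alpha_2^{-1}(\varepsilon)$ and $\bar\alpha_2(2\delta)+\delta\le\varepsilon$.
\item While $V_k\ge\varepsilon$, $V$ then drops by at least $\tfrac14\bar\alpha_2^{-1}(\varepsilon)$ per step.
\item Once $V_k\le\varepsilon$, it cannot leave $[0,\varepsilon]$.
\end{enumerate}
Hence $V_k\to0$ and so $\bar s^e_{k\mid k}\to0$.

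Your derivation of $\rho_k\to0$ is more self-contained than the paper's citation of \cite{Dhar2021}. It uses nonexpansiveness of $\mathrm{Proj}_\Psi$, giving $\|\Delta\hat\Theta_{k+1}\|_F\le\lambda\|\tilde x^e_{k+1}\|\|X^e_k\|$, together with $\tilde x^e_k\to0$ from the adaptive-law theorem. Keep it, make the comparison argument the main line, and drop the summability claim.
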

\begin{proof} A Lyapunov function candidate for the system \eqref{eq:nominal} is constructed as 
\begin{equation}
    V_k=\hspace{-0.4em}J^s_k\hspace{-0.4em}=\hspace{-0.4em}\sum_{i=0}^{N-1}\frac{1}{2}[\|\bar{s}^e_{k+i\mid k}\|_Q^2+\hspace{-0.4em}\|v^e_{k+i\mid k}\|_R^2]+\hspace{-0.4em}J^s_f(\bar{s}^e_{k+N\mid k})
    \label{eq:57}
\end{equation}
At $k^{th}$ instant the optimal control input sequence $\textbf{v}^e_{k}$ is obtained by solving the cost function $J^s_k$, having $\bar{s}^e_{k+i\mid k}$ as the optimal states for all $i\in \mathbb{I}_N^+$. A feasible solution $\textbf{v}^e_{k+1}$ is available at the $k+1^{th}$ instant, since at the $k^{th}$ instant $\mathbb{P}_2$ is feasible (using \textit{Lemmas \ref{lem:feasible},\ref{lem:Recursive Feasibility}} and \textit{Theorem \ref{thm:recur}}). The cost function $J^s_{k+1}$ and a scalar function $V_{k+1}$ corresponding to the solution $\textbf{v}^e_{k+1}$ is defined as
\begin{equation}
    \begin{aligned}
        &V_{k+1}=J^s_{k+1}=\sum_{i=1}^{N-1}\frac{1}{2}[\|\bar{s}^e_{k+i\mid k+1}\|_Q^2+\|v^e_{k+i\mid k}\|_R^2]+ \\&\frac{1}{2}[\|\bar{s}^e_{k+N\mid k+1}\|_Q^2+\|v^e_{k+N\mid k+1}\|_R^2]+J^s_f(\bar{s}^e_{k+1+N\mid k+1})
    \end{aligned}
\end{equation}
On implementing $\textbf{v}^e_{k+1}$ in \eqref{eq:nominal}, $\bar{s}^e_{k+i\mid k+1}$ are obtained. Let $\Delta V_{k+1}=V_{k+1}-V_k$. This yields:
\begin{equation}
    \begin{aligned}
        &\Delta V_{k+1}\hspace{-0.4em}=-\frac{1}{2}[\|\bar{s}^e_{k\mid k}\|_Q^2+\hspace{-0.4em}\|v^e_{k\mid k}\|_R^2]\hspace{-0.4em}+\hspace{-0.4em}\frac{1}{2}\|\bar{s}^e_{k+N\mid k}\|_Q^2-\hspace{-0.4em}J^s_f(\bar{s}^e_{k+N\mid k})\\&+\hspace{-0.4em}J^s_f(\bar{s}^e_{k+1+N\mid k+1})\hspace{-0.4em}+\hspace{-0.4em}\sum_{i=1}^N\frac{1}{2}[\|\epsilon^1_{k+i\mid k}\|_Q^2+\hspace{-0.4em}2\bar{s}^{e^T}_{k+i\mid k}Q\epsilon^1_{k+i\mid k}]\hspace{-0.4em}
    \end{aligned}
    \label{eq:question}
\end{equation}
$\epsilon^1_{k+i\mid k}$ is the error due to the difference in the prediction model $\hat{\Theta}_k$ at $k^{th}$ instant and $\hat{\Theta}_{k+1}$ at $k+1^{th}$ instant with respect to the input sequence $\textbf{v}^e_{k+1}$, defined as 
$\epsilon^1_{k+i\mid k}=\bar{s}^e_{k+i\mid k+1}-\bar{s}^e_{k+i \mid k}$. For the feasible solution $\textbf{v}^e_{k+1}$, $\bar{s}^e_{k+1+N\mid k+1}=\hat{A}^s_{k+1}\bar{s}^e_{k+N\mid k+1}=\hat{A}^s_{k+1}(\bar{s}^e_{k+N\mid k}+\epsilon^1_{k+N\mid k})$. From \eqref{eq:P} the following can be inferred $(\bar{s}^e_{k+N\mid k}(\hat{A}^s_{k+1}P\hat{A}^{s^T}_{k+1}-P+Q)\bar{s}^e_{k+N\mid k})<0$, that implies 
\begin{equation}
    \begin{aligned}
        &\frac{1}{2}\|\bar{s}^e_{k+N\mid k}\|_Q^2+J^s_f(\bar{s}^e_{k+1+N\mid k+1})-J^s_f(\bar{s}^e_{k+N \mid k})\leq \\&\frac{1}{2}\|\hat{A}^s_{k+1}\epsilon^1_{k+N\mid k}\|_P^2+\bar{s}^{e^T}_{k+N\mid k}\hat{A}^{s^T}_{k+1}P\hat{A}^s_{k+1}\epsilon^1_{k+N\mid k} \nonumber
    \end{aligned}
\end{equation}
The above inequality is utilized to modify \eqref{eq:question} as follows
\begin{align}\label{eq:stabili}
\Delta V_{k+1}&=-\frac{1}{2}[\|\bar{s}^e_{k\mid k}\|_Q^2+\|v^e_{k\mid k}\|_R^2]+\frac{1}{2}\|\hat{A}^s_{k+1}\epsilon^1_{k+N\mid k}\|_P^2 \nonumber \\
&+\bar{s}^{e^T}_{k+N\mid k}\hat{A}^{s^T}_{k+1}P\hat{A}^s_{k+1}\epsilon^1_{k+N\mid k}\sum_{i=1}^N\frac{1}{2}[\|\epsilon^1_{k+i\mid k}\|_Q^2\nonumber \\
&+2\bar{s}^{e^T}_{k+i\mid k}Q\epsilon^1_{k+i\mid k}]
\end{align}
As $\mathbb{P}_2$ is recursively feasible, $\bar{s}^e_{k+i\mid k}\in\mathcal{X}^e_i\subset\mathcal{X}^e, \forall i\in\mathbb{I}_{N-1}^+$ and $\bar{s}^e_{k+N\mid k}\in \bar{\mathcal{X}}^e_T\subset \mathcal{X}^e$, hence $\|\bar{s}^e_{k+i\mid k}\|\leq x_m, \forall i \in \mathbb{I}_N^+$, where $x_m$ is defined in \textit{Proposition 2}. Moreover, $\hat{A}^s_k\in \Psi^a, \forall k\in \mathbb{I}^+$, where $\Psi^a$ is a $C$-set, then $\|\hat{A}^s_k\|_F\leq a$, where $a\in \mathbb{R}$ is a scalar. The class $\mathcal{K}$ function $\rho(\|L_k\|)$ can be defined as
\begin{equation}
    \begin{aligned}
        &\rho(\|L_k\|)\triangleq \frac{1}{2}(a^2\|P\|_F\|\epsilon^1_{k+N\mid k}\|^2+\sum_{i=1}^N[\|Q\|_F\|\epsilon^1_{k+i\mid k}\|^2 \\ &+2x_m\|Q\|_F\|\epsilon^1_{k+i\mid k}\|])+a^2x_m\|P\|_F\|\epsilon^1_{k+N\mid k}\|
    \end{aligned}
    \label{eq:stab}
\end{equation}
The upper bound of $\|L_k\|$ is used to obtain the function $\rho(\|L_k\|)$. From $\textit{Remark 7}$ and \eqref{eq:57}, the following can be inferred
\begin{equation}
    \|\bar{s}^e_{k\mid k}\|_Q^2+\|v^e_{k\mid k}\|_R^2 \geq \bar{\alpha}^{-1}_2(J^s_k)=\alpha_2^{-1}(V_k)
    \label{eq:stability}
\end{equation}
$\bar{\alpha}_2^{-1}(.)$ is the inverse function of $\alpha_2$, and therefore is a class $\mathcal{K}$ function. Utilizing \eqref{eq:stab} and \eqref{eq:stability}, \eqref{eq:stabili} is further modified into 
\begin{equation}
    \Delta V_{k+1}\leq \frac{1}{2}\bar{\alpha}_2^{-1}(V_k)+\rho(\|L_k\|)
    \label{eq:lya}
\end{equation} 
From \cite{Dhar2021} it is seen that $\epsilon^1_{k+i \mid k}$ is bounded and $\epsilon^1_{k+i\mid k}\rightarrow 0_n, \forall i\in \mathbb{I}_N^+$ as $k\rightarrow 0$. Hence, from \eqref{eq:stab} it can be inferred that $\rho(\|L_k\|)$ is bounded and $\rho(\|L_k\|)\rightarrow 0$ as $k\rightarrow 0$. 
\end{proof}
\par \textit{Corollary 2:} Using \eqref{eq:sepa}, the following is claimed
\begin{equation}
    x^e_k= \bar{s}^e_k+\epsilon_k+e_k
\end{equation}
If \eqref{eq:Adaptive} updates $\hat{\Theta}_k$ and $\mathbb{P}_2$ is feasible then using \cite{Dhar2021} and \textit{Theorem }\ref{thm:stability} it is guaranteed that $\epsilon_k,e_k,\bar{s}^e_k$ are bounded and $\epsilon_k,e_k,\bar{s}^e_k\rightarrow 0_n$, as $k\rightarrow \infty$. This implies $x^e_k$ is bounded for all $k\in\mathbb{I}$ and $x^e_k\rightarrow 0_n$ as $k\rightarrow\infty$. Therefore, provided \textit{Assumption 2} holds and $x^e_k=x_k-x^r_k$, it is claimed that $x_k$ is ultimately bounded.  

\section{Simulation Results}
The developed control framework is implemented on a stable second-order discrete-time plant described by $A=[-0.5,0;0.5,-0.4]$ and $B=[1,0;0,1]$. The uncertain system is expected to track a time-varying reference signal. The model parameters of the uncertain system belongs to the set $\Psi$, given as $\Psi=co\{[A_i,B_i]\},i\in\mathbb{I}_4^+$, where $A_1=[-0.6, 0; 0.35, -0.5]$, $ A_2=[-0.4, 0; 0.35, -0.32]$, $A_3=[-0.6 ,0.15; 0.6, -0.5]$ and $A_4=[-0.4, 0.15; 0.6, -0.32]$ and $B_1=[0.8, 0; 0, 0.8]$, $B_2=[1.2, 0; 0,1.2]$, $B_3=[0.8,0.1; 0.1,0.8]$ and $B_4=[1.2, 0.1; 0.1, 1.2]$. The imposed state, input and input rate constraints are $\|x\|_\infty\leq2.25$, $\| u_k\|_\infty \leq 3$ and $\|\Delta u_k\|_\infty \leq 2.5$ respectively, where $x_k=[x^{1}_k;x^{2}_k]^T$. The initial estimated model parameters are $\hat{A}_0=[-0.4,0.15;0.35,-0.5]$ and $\hat{B}_0=[1 ,0.1;0.1,0.8]$. The prediction horizon length $N$ is set to 5. To assess the performance of the proposed control framework under various conditions, simulations are performed with 10 different initial parameter estimates and 10 different initial states distributed throughout the prescribed constraint region.  
\begin{figure}
    \centering
    \includegraphics[width=\linewidth]{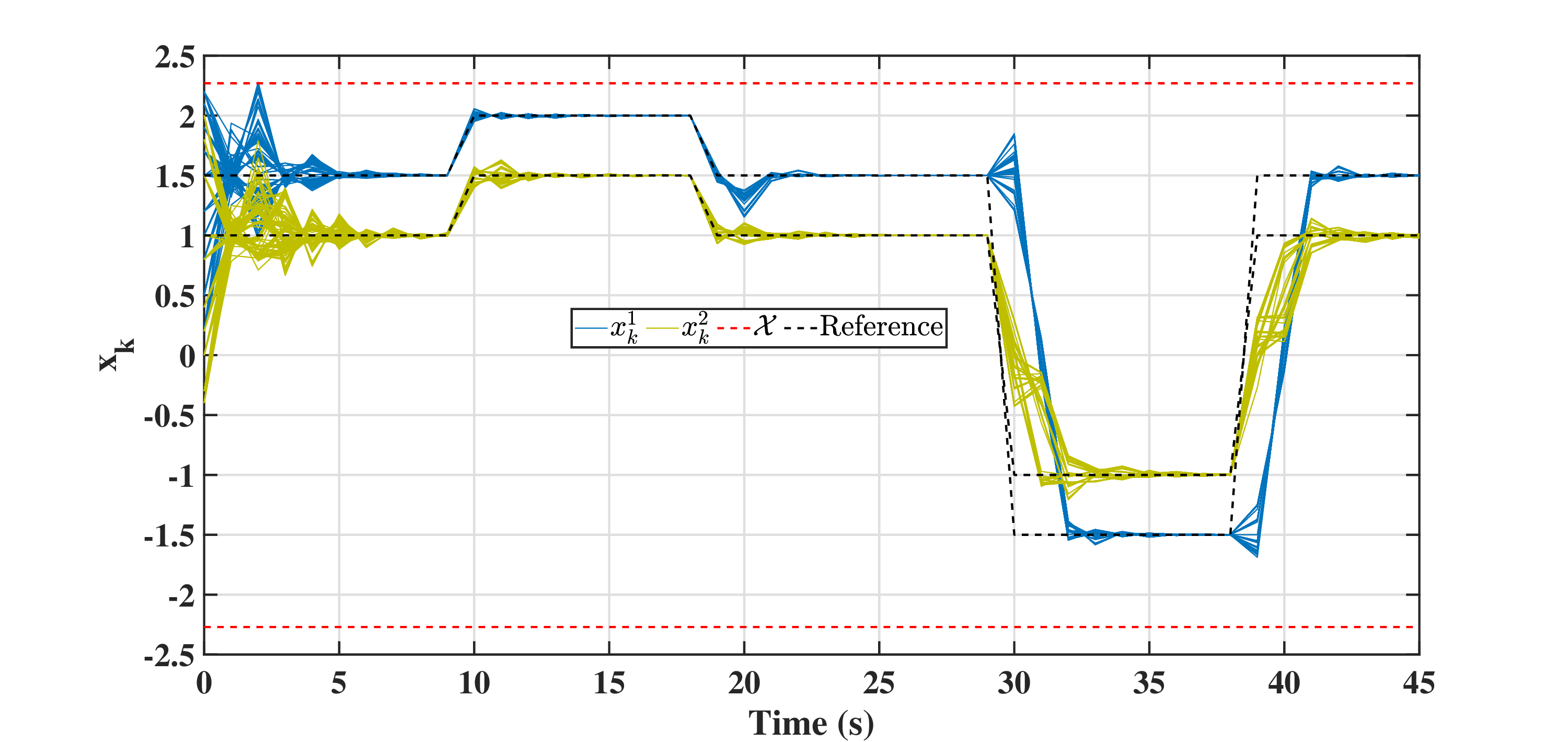}
    \caption{System States}
    \label{fig:1}
\end{figure}

\begin{figure}
    \centering
    \includegraphics[width=\linewidth]{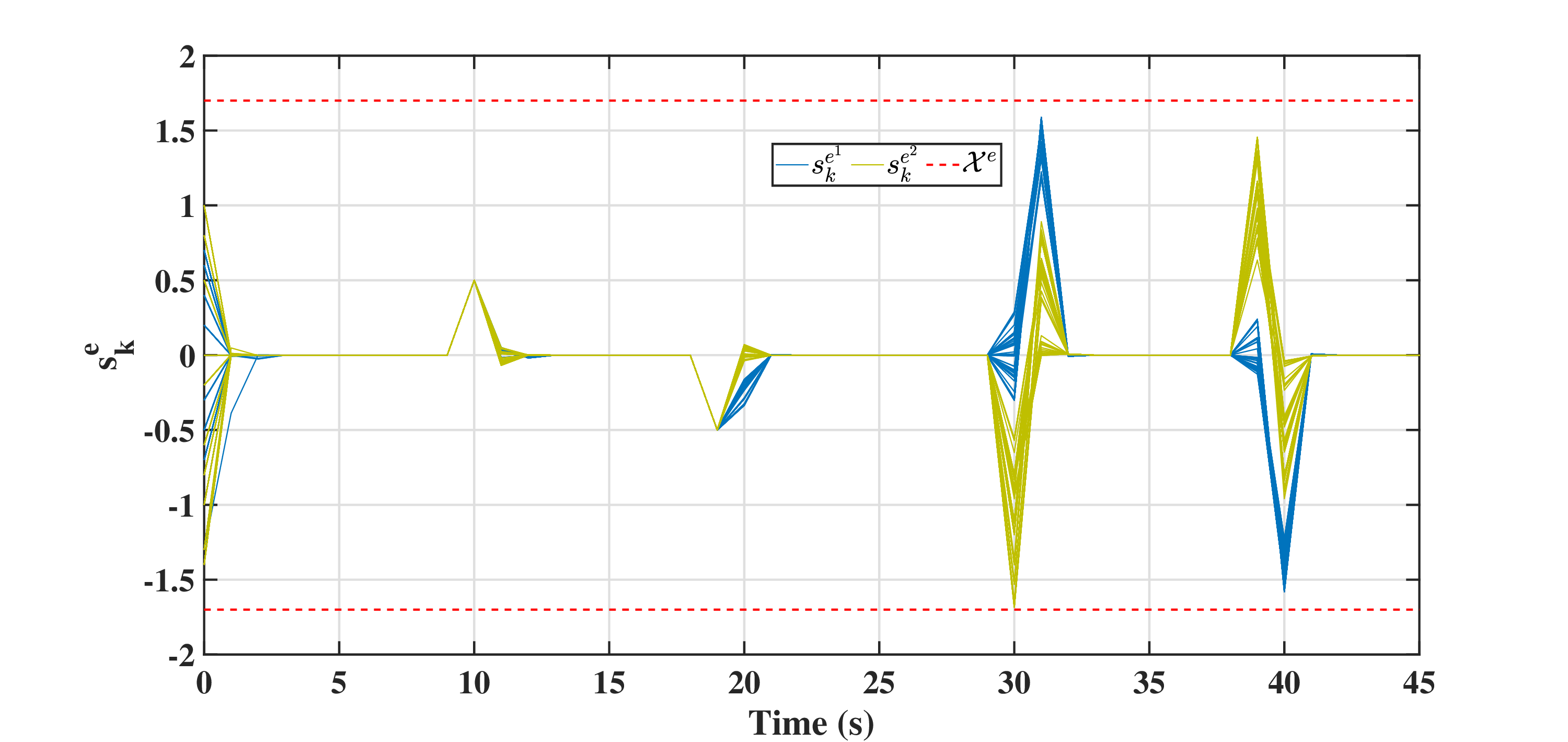}
    \caption{Error States}
    \label{fig:2}
\end{figure}
\begin{figure}
    \centering
    \includegraphics[width=\linewidth]{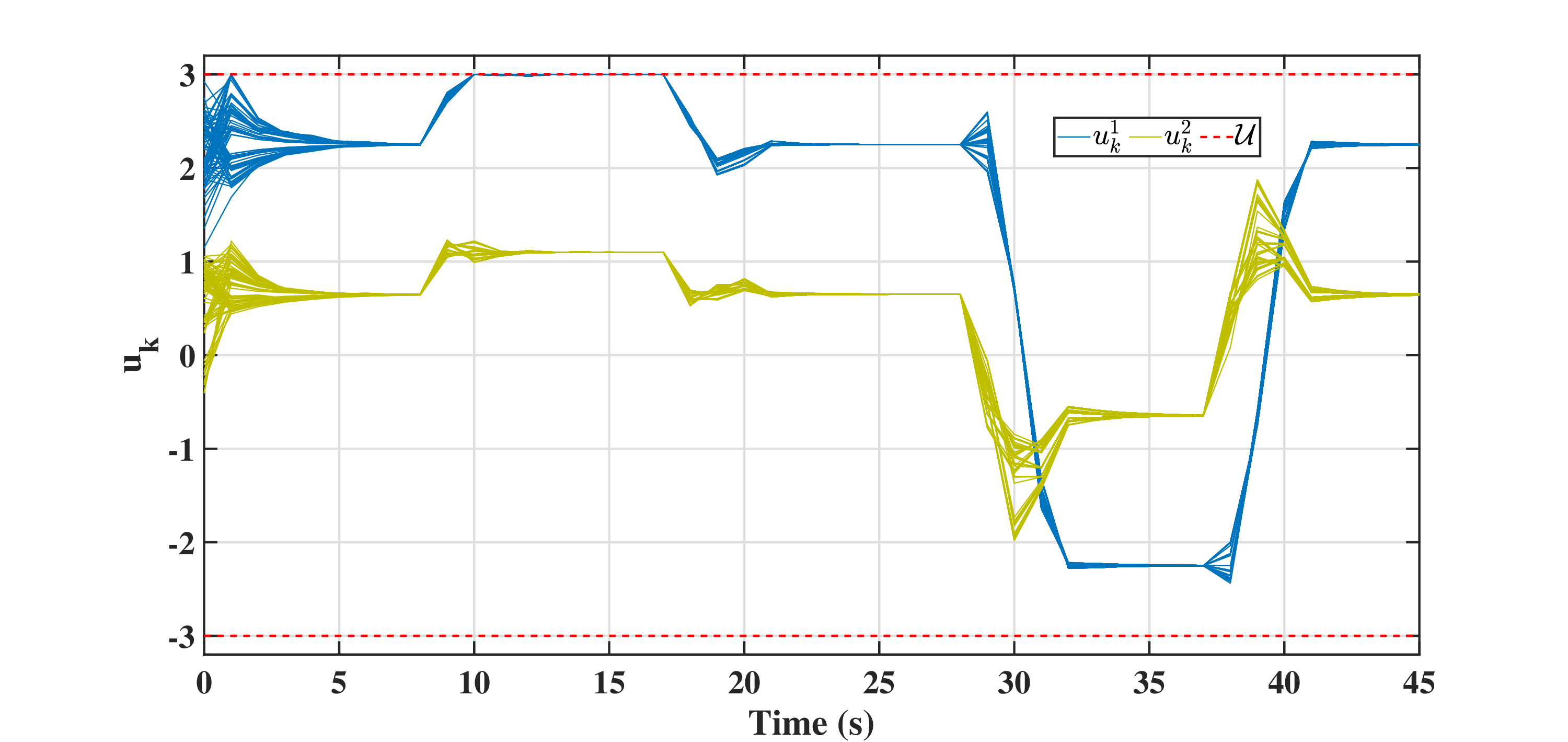}
    \caption{Control Input}
    \label{fig:3}
\end{figure}
\begin{figure}
    \centering
    \includegraphics[width=\linewidth]{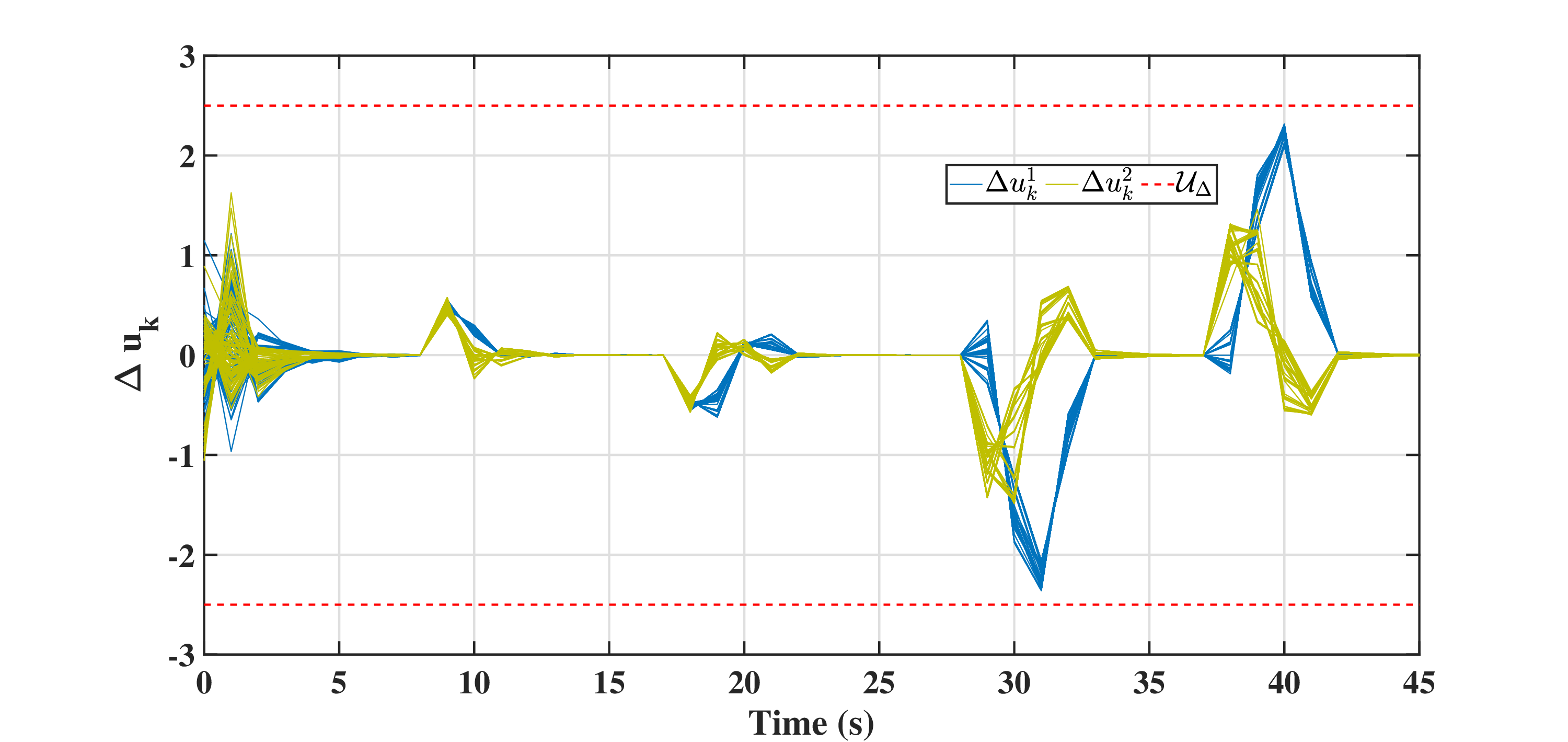}
    \caption{Rate of Change of Control Input}
    \label{fig:4}
\end{figure}
\begin{figure}
    \centering
    \includegraphics[width=\linewidth]{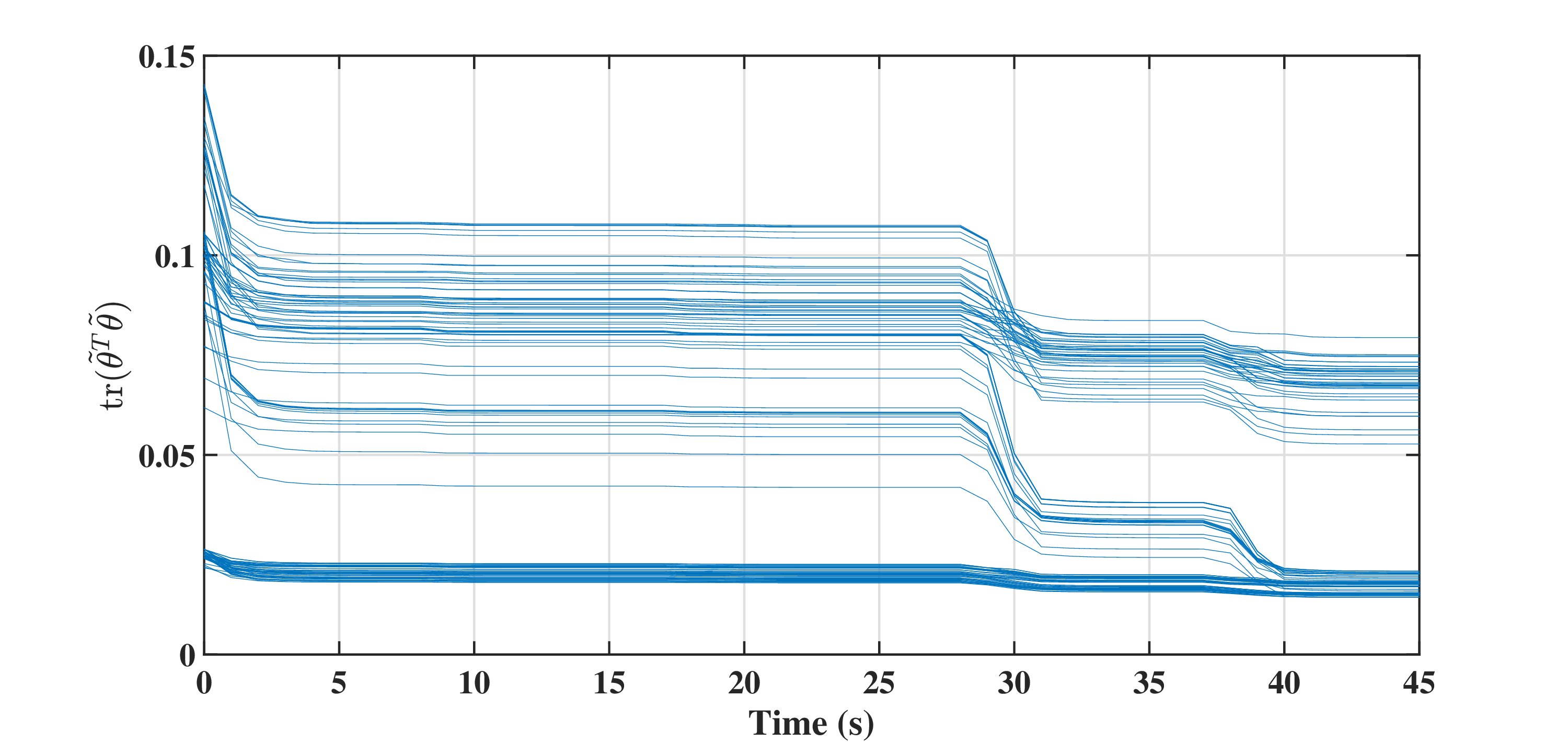}
    \caption{Parameter Estimation Error}
    \label{fig:5}
\end{figure}
\par Fig. \ref{fig:1} illustrates that the uncertain system achieves satisfactory tracking of the time-varying reference trajectory while ensuring that the system states remain within the prescribed constraint bounds. The nominal states of the error dynamics ($\bar{s}^e_k$) are also confined within their computed region as illustrated in Fig. \ref{fig:2}. The control inputs and input rate satisfy their respective constraint bounds as shown in Fig. \ref{fig:3} and Fig. \ref{fig:4}. The parameter estimation errors remain bounded as shown in Fig. \ref{fig:5}. 
In the same figure, a noticeable drop in parameter estimation errors is observed around 30s. This arises from a sudden change in the reference trajectory at that time instant. As the controller starts tracking the new reference signal, $\tilde{x}^e_{k+1}$ becomes non-zero, providing excitation to the adaptation law \eqref{eq:Adaptive}.  Thereby, resuming adaptive learning until the system converges to the new reference trajectory. 

\section{Conclusion}
An adaptive MPC control framework was proposed for the tracking of a discrete-time linear time-invariant system with bounded parametric uncertainties under hard constraints on state, input and input rate. To facilitate tracking, the tracking problem was reformulated as a regulatory problem by expressing the system dynamics in terms of their equivalent error dynamics with respect to a time-varying reference signal. Constraint reformulation in the error dynamics domain was performed to preserve the original system's physical limitations. The uncertain system parameters were adaptively estimated to construct an input to maintain zero tracking error. Despite the time-varying nature of the admissible control sequence, which rendered the standard recursive feasibility argument inapplicable, the developed solution ensured recursive feasibility by guaranteeing the continued solvability of the associated optimization problem at each time step. The tracking error dynamics were observed to converge to the origin while maintaining boundedness of the physical system states.


\bibliographystyle{unsrt}        
\bibliography{autosam}           



\end{document}